\def\author#1{\gdef\autrun{\def\and{\unskip, }#1}\gdef\@author{#1}}
\newtheorem{theorem}{Theorem}
\newtheorem{lemma}[theorem]{Lemma}
\newtheorem{corollary}[theorem]{Corollary}
\newtheorem{example}[theorem]{Example}
\date{}
\title{Error-correcting codes and absolutely maximally entangled states for mixed dimensional Hilbert spaces
\thanks{16 December 2025. This research is supported by the Spanish Ministry of Science, Innovation and Universities grant PID2023-147202NB-I00.}}
\author[1]{Simeon Ball\thanks{Email: simeon.michael.ball@upc.edu.}}
\author[2]{ Raven Zhang\thanks{Email: rzhang404@gmail.com. }}
\affil[1]{Dept.~of Mathematics, Universitat Politecnica Catalunya, 08034 Barcelona}
\affil[2]{Dept.~of Physics, Universitat de Barcelona, 08028 Barcelona}
\newcommand{\one}[0]{\mathds{1}}
\begin{document}

\maketitle

\begin{abstract}
A major difficulty in quantum computation is the ability to implement fault tolerant computations, protecting information against undesired interactions with the environment. Stabiliser codes were introduced as a means to protect information when storing or applying computations in Hilbert spaces where the local dimension is fixed, i.e. in Hilbert spaces of the form $({\mathbb C}^D)^{\otimes n}$. If $D$ is a prime power then one can consider stabiliser codes over finite fields \cite{KKKS2006}, which allows a deeper mathematical structure to be used to develop stabiliser codes. However, there is no practical reason that the subsystems should have the same local dimension and in this article we introduce a stabiliser formalism for mixed dimensional Hilbert spaces, i.e. of the form ${\mathbb C}^{D_1} \otimes \cdots \otimes {\mathbb C}^{D_n}$. More generally, we define and prove a Singleton bound for quantum error-correcting codes of mixed dimensional Hilbert spaces.
We redefine entanglement measures for these Hilbert spaces and follow \cite{HESG2018} and define absolutely maximally entangled states as states which maximise this entanglement measure. We provide examples of absolutely maximally entangled states in spaces of dimensions not previously known to have absolutely maximally entangled states.
\end{abstract}

\section{Introduction}

The theory of quantum error-correction, encoding and decoding methods initiated in the latter part of the nineties with the work of Gottesman \cite{Gottesman1996}, \cite{Gottesman1997} and Calderbank et al \cite{CRSS1998}, developed quickly over the following decade, see, for example, \cite{BrunLidar2013}, \cite{FGG2007}, \cite{Gottesman2009} and \cite{LZX2008}.
The need to correct for imperfect quantum processes as well as natural decoherence of information has led to further advances, with codes such as surface codes \cite{FAMMC2012}  and bosonic codes \cite{BESWZ2024}, \cite{CMSZS2021} being proposed. These codes assume that computations are made in Hilbert spaces of constant dimension. If we assume that the local dimension of each particle is a fixed prime power then one can use stabiliser codes over finite fields \cite{KKKS2006}, which allows for a rich mathematical structure to be exploited. More generally, Scott \cite{Scott2004} assumes that the local dimension is some integer $R$, but again the local dimension is fixed. Here we consider error-correcting codes for mixed dimensional Hilbert spaces,  i.e. where each individual system can have arbitrary dimension. 

There may be physical advantages to using mixed dimensional systems. Current NISQ-era physical error rates for qubit-based computation is dominated by two-qubit gates \cite{loschnauer2025scalable}. Two-qutrit gates suffer even higher error rates by an order of magnitude \cite{saxena2025realization} than contemporary two-qubit gates due to the increase in complexity as well as tighter gaps in higher quantum energy levels. While there are no benchmarked results known to the authors for qubit-qutrit hybrid operations to date, we expect, based on the above, for restriction to that subset of operations to yield improved fidelity on physical two-qutrit gates. Thus, while constant-dimensional qudit error correction schemes exist close to the hashing bound with higher error correction thresholds than qubit-based schemes \cite{andrist2015error}, a third avenue exists in-between for closing the gap between physical error rates and error correction thresholds. This makes use of error-aware compilation methods to limit the contributions of the highest error-rate physical gates to lower dimensionality, which can only be done in mixed-dimensional schemes. 

The only other previous works dedicated to constructing quantum error correcting codes over mixed alphabets, which we are aware of, appear in \cite{GBZ2016} and \cite{WYFO2013}. In \cite{WYFO2013} two ideas are presented. The first is to take spans of graph states for mixed alphabets of size $p$ and $q$ where $p$ divides $q$ resulting in quantum error-correcting codes in spaces $(\mathbb{C}^q)^{\otimes n_1} \otimes (\mathbb{C}^{p})^{\otimes n_2}$. The second is to project from a constant dimension code onto systems defined over smaller alphabets. In \cite{GBZ2016} the main idea is to use mixed orthogonal arrays to construct quantum states of heterogenous systems. Our approach here is different from the ideas presented in \cite{GBZ2016} and \cite{WYFO2013}. We introduce a stabiliser formalism for mixed dimensional Hilbert spaces. We consider abelian subgroups $\mathcal S$ of operators which are not confined to be generalised Pauli operators, or even subgroups of some nice error basis. Previous attempts to broaden the class of stabiliser codes include considering abelian subgroups of nice error bases \cite{KR2002a} and using non-abelian subgroups \cite{KR2002b}. We will prove here that it is possible to give a general formula for the dimension of the code stabilised by any abelian subgroup $\mathcal S$, see Theorem~\ref{dimensiontracetheorem}. A very important observation here is that these new concepts and results also apply to constant dimension spaces. In this article we will be particularly interested in the class of stabiliser codes which are defined by abelian groups of unitary operators which are permutations of the computational basis elements, up to unit scalars.
We also prove a Singleton bound for such codes, generalising the bound in \cite{WYFO2013}. 
We do not address the issue of measurements for this broader class of stabiliser codes,  nor the transformations of non-Pauli unitary stabiliser groups under operations and measurements, which remains open for future work. The articles \cite{Beaudrap2013} and \cite{BLvN2016} and subsequent articles detail an easy to use formalism for simulating stabiliser circuits on arbitrary fixed local dimension \cite{Beaudrap2013}  and mixed dimensional systems \cite{BLvN2016}, generalising the classical computational efficiency of simulating the Clifford group.

The entanglement measures of Meyer and Wallach \cite{MW2002}, further developed by Brennen \cite{Brennen2003}, led to Scott \cite{Scott2004} defining an entanglement measure based on subsystem linear entropies. In the same article, Scott also proved that a maximally entangled state is equivalent to a quantum error-correcting code. In this article, we generalise Scott's entanglement measure to mixed dimensional systems, following \cite{GBZ2016}, and reprove that a maximally entangled state is equivalent to a quantum error-correcting code, see Theorem~\ref{ameisstab}. Our entanglement measure is inspired by Huber et al \cite[Section X]{HESG2018}, where they give an example of a maximally entangled state in the Hilbert space 
$\mathbb C^2 \otimes \mathbb C^3 \otimes \mathbb C^3 \otimes \mathbb C^3$. They also prove that many mixed dimensional Hilbert spaces cannot have maximally entangled states by applying the generalised shadow inequalities, introduced earlier by Rains \cite{Rains1999}, \cite{Rains2000}. In particular, $\mathbb C^2 \otimes \mathbb C^2 \otimes \mathbb C^3 \otimes \mathbb C^3$ and $\mathbb C^2 \otimes \mathbb C^2 \otimes \mathbb C^2 \otimes \mathbb C^3$ do not have absolutely maximally entangled states. It was proven in \cite{HS2000} that $\mathbb C^2 \otimes \mathbb C^2 \otimes \mathbb C^2 \otimes \mathbb C^2$ does not have absolutely maximally entangled state.

Absolutely maximally entangled (AME) states for constant dimension Hilbert spaces are equivalent to $1$-dimensional quantum maximum distance separable (MDS) codes. There are many constructions known of such codes from classical MDS codes, see for example \cite{Ball2021, HC2013}. In the articles \cite{HC2013,  HCLRL2012} it was proven that there is an equivalence of absolutely maximally entangled states shared between an even number of parties and pure state threshold quantum secret sharing schemes.

This article is organised as follows. In Section~\ref{sectionerrors}, we define detectable and correctable errors for mixed dimensional Hilbert spaces and define stabiliser codes in a more general context of any abelian group of unitary operators. In Section \ref{sectionsingleton}, we prove a Singleton bound for mixed dimensional Hilbert spaces. In Section \ref{sectionent}, we redefine Scott's entanglement measure and prove a series of lemmas which justify the use of the term entanglement measure. In Section \ref{sectioname}, we prove the equivalence between absolutely maximally entangled states and certain quantum error-correcting codes. In Section \ref{sectionexamples}, we provide examples of quantum error-correcting codes in various mixed dimensional Hilbert spaces and some examples of absolutely maximally entangled states. In Section \ref{sectionhuber}, we provide a further example of an absolutely maximally entangled state. In certain examples, we explicitly give a set of commuting unitary operators which stabilise the state.

\section{Detectable and correctable errors} \label{sectionerrors}

Let $\mathbb H={\mathbb C}^{D_1} \otimes \cdots \otimes {\mathbb C}^{D_n}$ and define any subspace $Q$ of $\mathbb H$ to be a quantum error-correcting code. Let $B$ be an orthonormal basis for $Q$. Let $\mathcal E_d$ be a set of unitary endomorphisms of $\mathbb H$. The Knill-Laflamme conditions \cite[Theorem 3.2]{KL1997} state that $\mathcal E_d$ is a set of {\em detectable errors} for $Q$ if, for any $\ket{\psi_i}, \ket{\psi_j } \in B$ and $E \in \mathcal E_d$, 
\begin{equation} \label{kl}
\bra{\psi_i} E \ket{\psi_j}= c_E \delta_{ij},
\end{equation}
where $c_E$ is constant depending only on $E$.
A set $\mathcal E_c$ is a set of {\em correctable errors} if for all $E_r, E_s \in \mathcal E_c$, we have that $E_r^{\dagger}E_s \in \mathcal E_d$. Observe that the condition implies that orthogonal states of $Q$ remain orthogonal when they are acted upon by correctable errors.

The {\em support} of a local operator $E$ on a mixed dimensional system is a subsystem $S \subset \{1,\ldots, n \}$ of minimum size such that
$$
E=E_S \otimes \one_{S'}
$$
where $S'$ is the complement of the system $S$. 

Here, the dimension of the subsystem is
$$
\dim S=\prod_{i\in S} D_i.
$$
The weight of an error is usually defined as $|S|$. The dimension of each component of $\mathbb H$ can vary, so we define the {\em dimensional weight} of $E$ as
$$
\mathrm{dimwt}(E)= \dim S,
$$
where $S$ is the support of $E$.


As in Ketkar et al \cite{KKKS2006}, we define a {\em nice error basis} to be a basis of unitary endomorphisms $\mathcal E$ containing $\one$, for which the product of any two elements of $\mathcal E$ is the scalar multiple of an element of $\mathcal E$. Furthermore, for all non-identity $E \in \mathcal E$, the operator $E$ is traceless over any subsystem of its support $S$. i.e.
$$
\tr_T(E)=0,
$$
for any subsystem for which $S \cap T$ is non-trivial.


Let $\{ \ket{j} \ | \ j \in \mathbb Z/R\mathbb Z\}$ be a basis for ${\mathbb C}^R$.

The Weyl operators on $\mathbb C^R$ are defined, for $a, b \in \mathbb Z/R\mathbb Z$, by
$$
X(a)\ket{j} =\ket{a+j}, \ \ Z(b)\ket{j}=\eta^{bj} \ket{j}
$$
where $\eta=e^{2\pi i/R}$. The operator $X(a)Z(b)$ has trace zero, unless $a=b=0$ in which case it is the identity operator.

The endomorphisms $X(a)Z(b)$ form a nice error basis for the unitary endomorphisms on ${\mathbb C}^R$, since
$$
X(a_1)Z(b_1)X(a_2)Z(b_2)=\eta^{b_1a_2} X(a_1+a_2)Z(b_1+b_2).
$$
Thus, the set of Weyl operators
$$
\mathcal E=\{ X(a_1)Z(a_1) \otimes X(a_2)Z(b_2) \otimes \cdots \otimes X(a_n)Z(b_n) \ | \ 
a_i,b_i \in (\mathbb Z/D_i \mathbb Z)\}
$$
forms a nice error basis for $\mathbb H={\mathbb C}^{D_1} \otimes \cdots \otimes {\mathbb C}^{D_n}$.

We say that a quantum error-correcting code $Q$ of $\mathbb H$ has {\em dimensional minimum distance} $D$ if $\mathcal E_d$ contains all error endomorphisms of $\mathcal E$ of dimensional weight less than $D$.
If the dimension of $Q$ is $K$ then we say that $Q$ is a 
$$
(((D_1,\ldots,D_n),K,D))
$$ 
quantum error-correcting code.
Furthermore, we say that $Q$ is a {\em pure} 
$$
(((D_1,\ldots,D_n),K,D))
$$ 
quantum error-correcting code if $c_E=0$ for all $E \neq \one$. If $K=1$ then we insist that $Q$ is pure, otherwise condition (\ref{kl}) trivially holds.

Let $\mathcal S$ be an abelian (commutative) multiplicative subgroup of unitary endomorphisms. We define a stabiliser code to be
$$
Q(\mathcal S)=\{ \ket{\psi} \in \mathbb H \ | \ M\!\ket{\psi}=\ket{\psi}\!, \ \mathrm{for} \ \mathrm{all} \ M \in \mathcal S \}.
$$
Note that although this is the usual definition of a stabiliser code \cite[Equation (2)]{KKKS2006}, the important difference here is that we do not make any restriction on the abelian subgroup $\mathcal S$ per se. It is not necessary that it should be a subgroup of the multiplicative group of Weyl operators
$$
\mathcal W_{n,R}=\{ cX(a)Z(b) \ | \ a, b \in \mathbb Z/R\mathbb Z, \ c^R=1\},
$$
or any other nice error basis \cite{KR2002a}. Furthermore, although we are motivated to introduce this generalisation because we are considering mixed dimensional Hilbert spaces, we can also apply these ideas to constant dimension Hilbert spaces.

We mention here that Ketkar et al. \cite{KKKS2006}, see also \cite{Heinrich2021}, define stabiliser codes over the multiplicative group of Pauli operators, which are defined using a finite field ${\mathbb F}_q$, where $q=p^h$ for some prime $p$. In the case of Pauli operators the $Z(b)$ operator is defined as
$$
Z(b)\ket{j}=\eta^{\mathrm{tr}(bj)} \ket{j}
$$
where $\eta$ is a primitive $p$-root of unity and $\mathrm{tr}$ is the trace map from ${\mathbb F}_q$ to ${\mathbb F}_p$. The $X(a)$ operator is again defined as
$$
X(a)\ket{j}=\ket{a+j}
$$
where addition takes place in the finite field. When $q$ is an odd prime, the Weyl operators and the Pauli operators coincide. The advantage of defining stabiliser codes over finite fields, is that one can prove that abelian subgroups of the Pauli operators correspond to symplectic subspaces of a vector space over $\mathbb F_q$. Moreover, the set of Pauli operators which are not in the $\mathrm{Centraliser}(\mathcal S) \setminus \mathcal S$, are detectable errors. If $d$ is the minimum weight of an operator in $\mathrm{Centraliser}(\mathcal S) \setminus \mathcal S$ and $|\mathcal S|=q^{n-k}$, then the code $Q(\mathcal S)$ is said to be a $[\![n,k,d]\!]_q$ code of $(\mathbb C^q)^{\otimes n}$. Moreover, if $d$ is also the minimum weight of an operator in $\mathrm{Centraliser}(\mathcal S)$ then $Q(\mathcal S)$ is pure.

Allowing any abelian subgroup of unitary operators to define a stabiliser group leads to the conclusion that any subspace $Q$ is a stabiliser code. This follows since we can define $\mathcal S$ to be the subgroup generated by the unitary operator
$$
U=2P_0-\one,
$$
where $P_0$ is the projector onto $Q$. Later, we will make some restriction on $\mathcal S$, namely, that its elements are permutations, up to multiplying by roots of unity, of the elements of the computational basis.  In other words, a typical element $M=(m_{ij})$ of $\mathcal S$ would be defined by a permutation $\sigma$ of $\{1,\ldots,\dim \mathbb H\}$, where
 $$
 m_{ij}=\delta_{j \sigma(i)} \eta_i,
 $$
 for some $\eta_i \in \mathbb C$ such that $|\eta_i|=1$. This allows for a far larger class of stabiliser codes to be found and exploited than simply restricting ourselves to a nice error basis or local Clifford operators.
 
\begin{theorem} \label{dimensiontracetheorem}
The dimension of $Q(\mathcal S)$ is 
$$
\frac{1}{|\mathcal S|} \sum_{M \in \mathcal S} \mathrm{tr}(M).
$$
\end{theorem}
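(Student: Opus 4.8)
The plan is to exhibit the orthogonal projector onto $Q(\mathcal S)$ explicitly as the group average
$$
P=\frac{1}{|\mathcal S|}\sum_{M\in\mathcal S}M,
$$
and then invoke the fact that the trace of an orthogonal projector equals the dimension of its image. I am using implicitly that $\mathcal S$ is finite, which is in any case needed for the sum in the statement to make sense.

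First I would check that $P$ is an orthogonal projector, i.e.\ that $P^{\dagger}=P$ and $P^2=P$. For the first identity, since each $M\in\mathcal S$ is unitary we have $M^{\dagger}=M^{-1}\in\mathcal S$, and as inversion is a bijection of the group, $P^{\dagger}=\frac{1}{|\mathcal S|}\sum_{M\in\mathcal S}M^{-1}=P$. For idempotency, note that for any fixed $N\in\mathcal S$ the map $M\mapsto NM$ permutes $\mathcal S$, so $NP=P$; expanding $P^2$ and summing this identity over all $N\in\mathcal S$ then gives $P^2=P$. Notice that commutativity of $\mathcal S$ is not actually required for this step; one only uses that $\mathcal S$ is a finite group of unitaries.

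Next I would identify the image of $P$ with $Q(\mathcal S)$. If $\ket{\psi}\in Q(\mathcal S)$ then $M\ket{\psi}=\ket{\psi}$ for all $M\in\mathcal S$, so $P\ket{\psi}=\ket{\psi}$ and $\ket{\psi}$ lies in the image of $P$. Conversely, any vector of the form $P\ket{\phi}$ satisfies $NP\ket{\phi}=P\ket{\phi}$ for every $N\in\mathcal S$, by the identity $NP=P$ established above, hence is fixed by all of $\mathcal S$ and lies in $Q(\mathcal S)$. Thus the image of $P$ is exactly $Q(\mathcal S)$.

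Finally, since $P$ is an orthogonal projector onto $Q(\mathcal S)$, its trace equals the dimension of its image, so that by linearity of the trace
$$
\dim Q(\mathcal S)=\mathrm{tr}(P)=\frac{1}{|\mathcal S|}\sum_{M\in\mathcal S}\mathrm{tr}(M).
$$
I do not expect a genuine obstacle in this argument; the only points requiring care are verifying that $P$ is honestly Hermitian and idempotent (both of which rest on $\mathcal S$ being a finite group that is closed under the adjoint via unitarity), and recalling that the rank of an orthogonal projector coincides with its trace.
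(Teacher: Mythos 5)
Your proof is correct and follows essentially the same route as the paper: exhibit the group average $\frac{1}{|\mathcal S|}\sum_{M\in\mathcal S}M$ as an idempotent whose image is exactly $Q(\mathcal S)$, then use that the trace of a projector equals the dimension of its image. Your additional verification that $P^{\dagger}=P$ is harmless but not needed, since the trace of any idempotent already equals its rank, which is all the paper uses.
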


\begin{proof}
Let $P$ be the projector onto the subspace $Q(\mathcal S)$. Since $P^2=P$ the eigenvalues of $P$ are zero and one. The image of $P$ is its eigenspace of eigenvalue one, which is also $Q(S)$. The eigenvalues of $P$ are zero and one, so the dimension of the eigenspace of eigenvalue one is the sum of the eigenvalues of $P$, which is equal to $\tr(P)$. 

It remains to prove that
$$
P=\frac{1}{|\mathcal S|} \sum_{M \in \mathcal S} M.
$$
Observe that 
$$
\left(\frac{1}{|\mathcal S|} \sum_{M \in \mathcal S} M\right)^2=\frac{1}{|\mathcal S|} \sum_{N \in \mathcal S} N \frac{1}{|\mathcal S|} \sum_{M \in \mathcal S} M= \sum_{N,M \in \mathcal S} \frac{1}{|\mathcal S|^2} NM=\frac{1}{|\mathcal S|} \sum_{M \in \mathcal S} M
$$
and that 
$$
\frac{1}{|\mathcal S|} \sum_{M \in \mathcal S} M
$$
fixes any element of $Q(\mathcal S)$. 

Furthermore, if 
$$
\ket{\psi}=\left(\frac{1}{|\mathcal S|} \sum_{M \in \mathcal S} M \right) \ket{\phi} 
$$
then, for all $E \in \mathcal S$,
$$
E\ket{\psi}=E\left(\frac{1}{|\mathcal S|} \sum_{M \in \mathcal S} M\right)\ket{\phi}=\frac{1}{|\mathcal S|} \sum_{M \in \mathcal S} EM \ket{\phi}=\frac{1}{|\mathcal S|} \sum_{M \in \mathcal S} M\ket{\phi}=\ket{\psi},
$$
so $\ket{\psi}$ is fixed by $\mathcal S$, i.e. $\ket{\psi} \in Q(\mathcal S)$.

Thus,
$$
P=\frac{1}{|\mathcal S|} \sum_{M \in \mathcal S} M.
$$
\end{proof}

To summarise this section, we have introduced the parameter of dimensional minimum distance for a quantum error-correcting code. We observed that the standard definition of a stabiliser code is valid for any abelian subgroup of unitary operators. We then calculated its dimension in Theorem~\ref{dimensiontracetheorem}.

\section{Quantum Singleton bound for mixed dimensional spaces} \label{sectionsingleton}

In this section we prove a quantum Singleton bound for mixed dimensional spaces. Recall that the quantum Singleton bound states that for a quantum error-correcting code $Q$ of $(\mathbb C^q)^{\otimes n}$ 
of dimension $q^k$, the minimum distance $d$ satisfies the inequality
\begin{equation} \label{qsb}
n \geqslant k+2(d-1).
\end{equation}
A code which attains this bound is called a quantum MDS code. The binary version of the quantum Singleton bound was first proved by Knill and Laflamme in \cite{KL1997} and later generalised by Rains in \cite{Rains1999b}.
Our bound is a generalisation of the Singleton bound proved in \cite[Theorem 3]{WYFO2013}. The essential difference is that we consider the dimensional minimum distance and not the usual definition of minimum distance used for constant dimensional Hilbert spaces.

To prove Theorem~\ref{singleton}, we use the following lemma, which is called the sub-additivity of entropy \cite[Equation 11.72]{NielsenChuang2000}. One can readily check that the proof of Klein's inequality \cite[Theorem 11.71]{NielsenChuang2000} and sub-additivity are valid for mixed dimensional systems.

\begin{lemma} \label{subadd}
Suppose that AB is a bipartite partition of a possibly mixed dimensional Hilbert space $\mathbb H$. Let $\rho^{AB}$ be a density operator describing the system. Then, the von-Neumann entropy satisfies
$$
S(\rho^{\mathrm{AB}}) \leqslant S(\rho^{\mathrm{A}})+ S(\rho^{\mathrm{B}})
$$
\end{lemma}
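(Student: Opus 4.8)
The plan is to deduce subadditivity from the non-negativity of the quantum relative entropy, i.e.\ Klein's inequality, precisely as in the constant-dimensional case, verifying at each step that nothing uses equality of the local dimensions. Recall that for density operators $\rho,\sigma$ on a finite-dimensional Hilbert space the relative entropy is
$$
S(\rho\,\Vert\,\sigma)=\tr(\rho\log\rho)-\tr(\rho\log\sigma),
$$
and Klein's inequality asserts $S(\rho\,\Vert\,\sigma)\geqslant 0$. I would apply this with $\rho=\rho^{\mathrm{AB}}$ and $\sigma=\rho^{\mathrm A}\otimes\rho^{\mathrm B}$, where $\rho^{\mathrm A}=\tr_{\mathrm B}(\rho^{\mathrm{AB}})$ and $\rho^{\mathrm B}=\tr_{\mathrm A}(\rho^{\mathrm{AB}})$ are the reduced density operators on the two factors of the bipartition.

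The key algebraic step is that the logarithm of a tensor product splits additively,
$$
\log(\rho^{\mathrm A}\otimes\rho^{\mathrm B})=(\log\rho^{\mathrm A})\otimes\one+\one\otimes(\log\rho^{\mathrm B}),
$$
which follows from the spectral decompositions of the two marginals and holds whether or not $\dim\mathbb H^{\mathrm A}=\dim\mathbb H^{\mathrm B}$. Substituting this into $\tr(\rho^{\mathrm{AB}}\log\sigma)$ and using that tracing $\rho^{\mathrm{AB}}$ against an operator supported on a single factor returns the corresponding marginal, one obtains
$$
\tr\big(\rho^{\mathrm{AB}}\log(\rho^{\mathrm A}\otimes\rho^{\mathrm B})\big)=\tr(\rho^{\mathrm A}\log\rho^{\mathrm A})+\tr(\rho^{\mathrm B}\log\rho^{\mathrm B})=-S(\rho^{\mathrm A})-S(\rho^{\mathrm B}).
$$
Klein's inequality then reads $-S(\rho^{\mathrm{AB}})+S(\rho^{\mathrm A})+S(\rho^{\mathrm B})\geqslant 0$, which is exactly the asserted bound.

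The only ingredient that requires genuine checking — and what I would flag as the main, if mild, obstacle — is that Klein's inequality itself survives the passage to mixed dimensions. Its standard proof expands $\rho$ and $\sigma$ in their respective eigenbases $\{\ket{e_i}\}$ and $\{\ket{f_j}\}$ of $\mathbb H$ and reduces the statement to the scalar concavity of the logarithm, applied against the matrix of overlaps $P_{ij}=\lvert\langle e_i\vert f_j\rangle\rvert^2$. Since $\{\ket{e_i}\}$ and $\{\ket{f_j}\}$ are orthonormal bases of the same space, $P$ is a square doubly-stochastic matrix of size $\dim\mathbb H$, independently of how $\mathbb H$ factorises, so the argument goes through verbatim. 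I would therefore present the proof as a brief verification that each ingredient — the spectral theorem, the tensor-logarithm identity, and the partial-trace reduction of the cross terms — is dimension-agnostic, rather than reproving Klein's inequality in full.
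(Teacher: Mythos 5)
Your argument is correct and is exactly the route the paper takes: the paper gives no proof of its own, merely asserting that the standard Nielsen--Chuang derivation of subadditivity from Klein's inequality carries over to mixed dimensional systems, which is precisely what you verify in detail. Your observation that the overlap matrix $P_{ij}=\lvert\langle e_i\vert f_j\rangle\rvert^2$ depends only on $\dim\mathbb H$ and not on its tensor factorisation is the substantive point behind the paper's one-line remark.
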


 
 Theorem~\ref{singleton} is a quantum Singleton bound for mixed dimensional spaces which generalises the quantum Singleton bound for constant dimensional spaces. 

\begin{theorem} \label{singleton}
If there is a 
$$
(((D_1,\ldots,D_n),K,D))
$$ 
quantum error-correcting code then, for any ordering of $(D_1,\ldots,D_n)$ and $r$ and $s$ such that
$$
\prod_{i=1}^s D_i <D \ \ \mathrm{and} \prod_{i=s+1}^{r+s} D_i <D,
$$
we have that
$$
\prod_{i=r+s+1}^n D_i \geqslant K.
$$
\end{theorem}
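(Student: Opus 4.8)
The plan is to run the standard entropic proof of the quantum Singleton bound, adapted to the mixed-dimensional setting through the dimensional weight. I would partition the $n$ parties into three blocks $A=\{1,\dots,s\}$, $B=\{s+1,\dots,s+r\}$ and $C=\{s+r+1,\dots,n\}$, so that the two hypotheses read $\dim A<D$ and $\dim B<D$, while the desired conclusion is exactly $\dim C\geq K$. First I would form the maximally mixed state on the code, $\rho=P/K$, where $P$ is the projector onto $Q$, and purify it by a reference system $R$ of dimension $K$: writing $\{\ket{\psi_k}\}$ for an orthonormal basis of $Q$, set $\ket{\Psi}=\tfrac{1}{\sqrt{K}}\sum_k\ket{\psi_k}\ket{k}_R$. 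Then $\rho^{R}$ is maximally mixed on a $K$-dimensional space, so $S(\rho^{R})=\log K$, and since $\ket{\Psi}$ is pure on $ABCR$, complementary marginals have equal entropy.

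The key step is a decoupling claim. Because $\dim A<D$, every nice-error-basis operator supported on $A$ has dimensional weight $<D$ and so lies in $\mathcal E_d$; since the Weyl operators of the block $A$ span all operators of the form $E_A\otimes\one_{A'}$, the Knill--Laflamme identity $\bra{\psi_i}E\ket{\psi_j}=c_E\delta_{ij}$ extends by linearity to every operator acting on $A$. Evaluating $\mathrm{tr}\!\left((E_A\otimes F_R)\rho^{AR}\right)$ against this identity produces $\mathrm{tr}(E_A\rho^A)\,\mathrm{tr}(F_R\rho^R)$ for all $E_A$ and $F_R$, whence $\rho^{AR}=\rho^A\otimes\rho^R$; the identical argument applied to $B$ gives $\rho^{BR}=\rho^B\otimes\rho^R$. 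In particular $S(\rho^{AR})=S(\rho^A)+S(\rho^R)$ and $S(\rho^{BR})=S(\rho^B)+S(\rho^R)$.

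Finally I would combine these factorisations with subadditivity (Lemma~\ref{subadd}) and purity. Purity yields $S(\rho^{AR})=S(\rho^{BC})$ and $S(\rho^{BR})=S(\rho^{AC})$, so the two decoupling identities become $S(\rho^{BC})=S(\rho^A)+S(\rho^R)$ and $S(\rho^{AC})=S(\rho^B)+S(\rho^R)$. Applying $S(\rho^{BC})\leq S(\rho^B)+S(\rho^C)$ and $S(\rho^{AC})\leq S(\rho^A)+S(\rho^C)$ and adding the two resulting inequalities, the terms $S(\rho^A)+S(\rho^B)$ cancel and leave $2S(\rho^R)\leq 2S(\rho^C)$, that is $S(\rho^R)\leq S(\rho^C)$. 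Since $S(\rho^C)\leq\log\dim C$ and $S(\rho^R)=\log K$, we obtain $\log K\leq\log\dim C$, i.e. $\prod_{i=r+s+1}^n D_i=\dim C\geq K$, as required.

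I expect the main obstacle to be the decoupling claim rather than the entropy arithmetic. One must check carefully that detectability of the weight-$<D$ basis errors on $A$, which is all the hypothesis $\dim A<D$ delivers, propagates by linearity to the full matrix-algebra statement needed to conclude $\rho^{AR}=\rho^A\otimes\rho^R$, and that the nice error basis constructed above does restrict to a spanning set of operators on the block $A$. The remaining entropic steps survive the mixed-dimensional setting precisely because Lemma~\ref{subadd} and the bound $S(\rho^C)\leq\log\dim C$ never invoke a common local dimension.
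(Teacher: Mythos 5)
Your proposal is correct and follows essentially the same route as the paper: the same $A,B,C$ partition, the same purification by a $K$-dimensional reference, the same decoupling of $R$ from $A$ and from $B$, and the same combination of purity with subadditivity to get $S(\rho^{\mathrm R})\leqslant S(\rho^{\mathrm C})$. If anything, your treatment of the decoupling step (detectability of all nice-error-basis elements supported in $A$, extended by linearity to the full operator algebra on $A$, giving $\rho^{AR}=\rho^{A}\otimes\rho^{R}$) is more explicit than the paper's, which simply asserts that $R$ and $A$ must be uncorrelated.
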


\begin{proof} 

Let $\{\ket{\psi_j} \ | \ j=1,\ldots,K\}$ be an orthonormal basis for a 
$$
(((D_1,\ldots,D_n),K,D))
$$ 
quantum error-correcting code on $\mathbb H={\mathbb C}^{D_1} \otimes \cdots \otimes {\mathbb C}^{D_n}$. We split $\mathbb H$ into three system A, B and C where
$$
\dim \mathrm A =\prod_{i=1}^s D_i <D, \ \ \dim \mathrm B=  \prod_{i=s+1}^{r+s} D_i <D
$$  
and
$$
\dim  \mathrm C=\prod_{i=r+s+1}^{n} D_i,
$$
where we choose a suitable ordering of $(D_1,\ldots,D_n)$.

Let R be a reference system of dimension $K$ with a basis $\{\ket{j} \ | \ j=1,\ldots,K\}$. 
Define a state on the RABC system,
$$
\ket{\phi}=\frac{1}{\sqrt{K}}\sum_{j=1}^K \ket{j} \ket{\psi_j}.
$$
Note that $\ket{\psi_j}$ is a basis element of the code-space belonging to the ABC system.

Let
$$
\rho^{\mathrm{RABC}}=\ketbra{\phi}.
$$

Since the code has dimensional distance $D$, any error of dimensional weight less than $D$, where the support is known, can be corrected. Therefore, we can correct errors whose support is contained in either A or B. It follows that R and A must be uncorrelated, as are R and B.

Therefore,
$$
S(\rho^{\mathrm{AR}})=S(\rho^{\mathrm A})+S(\rho^{\mathrm R}).
$$ 
and similarly for the B system.

By sub-additivity of entropy, Lemma~\ref{subadd}, we have
$$
S(\rho^{\mathrm{BC}}) \leqslant S(\rho^{\mathrm C})+S(\rho^{\mathrm B}).
$$

Since $\rho^{\mathrm{RABC}}$ is a pure state,
$$
S(\rho^{\mathrm{BC}})=S(\rho^{\mathrm{AR}}).
$$
Combining these equations, gives
$$
S(\rho^{\mathrm A})+S(\rho^{\mathrm R})\leqslant S(\rho^{\mathrm C})+S(\rho^{\mathrm B}).
$$
Switching systems A and B, we have
$$
S(\rho^{\mathrm B})+S(\rho^{\mathrm R})\leqslant S(\rho^{\mathrm C})+S(\rho^{\mathrm A}),
$$
which implies
$$
S(\rho^{\mathrm R}) \leqslant S(\rho^{\mathrm C}).
$$
Since
$$
\rho^{\mathrm R}=\frac{1}{K} \sum_{j=1}^K \ketbra{j},
$$
we have
$$
S(\rho^{\mathrm R})=\log K.
$$
Combining this with
$$
S(\rho^{\mathrm C}) \leqslant \log  \dim \mathrm C=\sum_{i=r+s+1}^n \log D_i,
$$
the claim follows.
\end{proof}

 To see that Theorem~\ref{singleton} is a generalization of the quantum Singleton bound for constant dimensional spaces, observe that a $[[n,k,d]]_q$ code, in our notation, is a 
$$
(((q,\ldots,q),q^k,q^d))
$$ 
quantum error-correcting code. Thus, we can take $s=r=d-1$ and obtain
$$
q^{n-2(d-1)} \geqslant q^k,
$$
which gives (\ref{qsb}).

\section{Entanglement measures} \label{sectionent}

Building on the ideas of Meyer and Wallach \cite{MW2002} and Brennen \cite{Brennen2003}, Scott \cite{Scott2004} introduced the following entanglement measure for states $\ket{\psi}$ in the Hilbert space $({\mathbb C}^D)^{\otimes n}$. To measure the bipartite entanglement on subsystems of $m\leqslant \frac{1}{2}n$ particles and their complements, he defined
$$
\mathrm{EM}_m (\ket{\psi})=\frac{1}{{n \choose m}} \sum_{|S|=m}\frac{D^m}{D^m-1}\left(1- \tr(\rho_S^2) \right),
$$
where $\rho_S$ is obtained from $\rho=\ketbra{\psi}$ by tracing out on the complement system $S'$, i.e. $S'=\{1,\ldots,n\} \setminus S$.

Scott proved that 
$$
0 \leqslant \mathrm{EM}_m (\ket{\psi})\leqslant 1,
$$
where
$$
\mathrm{EM}_m (\ket{\psi})=0
$$ 
if and only if $\ket{\psi}$ is a product state, i.e. $\ket{\psi}=\ket{\psi_1} \otimes \cdots \otimes \ket{\psi_n}$, where $\ket{\psi_i}$ is a state of the $i$-th subsystem. 

He also proved that $\mathrm{EM}_m(\ket{\psi})=1$ if and only if
$$
\rho_S=\frac{1}{|S|} \one_S,
$$ 
for all subsystems $S$ of at most $\frac{1}{2}n$ particles.

For our mixed dimensional Hilbert space $\mathbb H$, we define an entanglement measure, not on a specific number of systems, but on a dimension 
$$
r=\prod_{i\in S}^n D_i.
$$
for some $S \subset \{1,\ldots n\}$.

The entanglement measure we propose is then
$$
\mathrm{EM}_r (\ket{\psi})=\frac{1}{f}\sum_{\dim S=r} \frac{r}{r-1}\left(1- \tr(\rho_S^2)\right),
$$
where $f$ is the number of subsystems $S$ for which $\dim S=r$.

We now prove three essential properties which justify calling $\mathrm{EM}_r$ an entanglement measure.

\begin{lemma} \label{between0and1}
$$
0 \leqslant \mathrm{EM}_r (\ket{\psi})\leqslant 1
$$
\end{lemma}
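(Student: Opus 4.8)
The plan is to bound each term of the sum defining $\mathrm{EM}_r(\ket{\psi})$ separately in the interval $[0,1]$; since $\mathrm{EM}_r(\ket{\psi})$ is the average of $f$ such terms, the inequality for the average then follows immediately. Everything reduces to estimating the purity $\tr(\rho_S^2)$ of each reduced density operator.

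First I would note that, for each subsystem $S$ with $\dim S = r$, the operator $\rho_S$ is a density operator on a Hilbert space of dimension $r$: it is Hermitian, positive semidefinite, and satisfies $\tr(\rho_S) = 1$. Diagonalising $\rho_S$, its eigenvalues $\lambda_1,\dots,\lambda_r$ obey $\lambda_i \geqslant 0$ and $\sum_{i=1}^r \lambda_i = 1$, and $\tr(\rho_S^2) = \sum_{i=1}^r \lambda_i^2$. The two inequalities I need are
$$
\frac{1}{r} \leqslant \tr(\rho_S^2) \leqslant 1.
$$
The upper bound follows from $\sum_i \lambda_i^2 \leqslant \left(\sum_i \lambda_i\right)^2 = 1$, which is valid because the $\lambda_i$ are nonnegative. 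The lower bound is the Cauchy--Schwarz inequality $1 = \left(\sum_i \lambda_i\right)^2 \leqslant r \sum_i \lambda_i^2$, where the factor $r$ is the number of eigenvalues, namely $\dim S$.

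Substituting these bounds into the summand $\frac{r}{r-1}\left(1 - \tr(\rho_S^2)\right)$ gives the desired range. From $\tr(\rho_S^2) \leqslant 1$ we get $1 - \tr(\rho_S^2) \geqslant 0$, so the summand is nonnegative; from $\tr(\rho_S^2) \geqslant 1/r$ we get $1 - \tr(\rho_S^2) \leqslant (r-1)/r$, so the summand is at most $\frac{r}{r-1}\cdot\frac{r-1}{r} = 1$. Averaging over the $f$ subsystems of dimension $r$ then yields $0 \leqslant \mathrm{EM}_r(\ket{\psi}) \leqslant 1$.

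There is no genuine obstacle here; the entire content is the standard bound $1/r \leqslant \tr(\rho_S^2) \leqslant 1$ on the purity of a state on an $r$-dimensional space. The only point worth flagging is that the normalising constant $\frac{r}{r-1}$ is chosen precisely so that the maximally mixed reduced state $\rho_S = \frac{1}{r}\one_S$, whose purity is $1/r$, makes the summand equal to exactly $1$. This is what calibrates the upper bound and, I expect, what will characterise the maximally entangled states in the subsequent lemmas.
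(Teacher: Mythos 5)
Your proof is correct and follows essentially the same route as the paper: diagonalise $\rho_S$, use $\sum_i\lambda_i=1$ to get $1/r\leqslant\tr(\rho_S^2)\leqslant 1$, and substitute into each summand. The only difference is cosmetic --- you justify the lower bound on the purity explicitly via Cauchy--Schwarz where the paper simply asserts the sum is minimised at $\lambda_i=1/r$.
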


\begin{proof}
By the spectral decomposition theorem,
$$
\rho_S=\sum_i \lambda_i \ketbra{\psi_i},
$$
where $\ket{\psi_i}$ form an orthonormal basis of eigenvectors of $\rho_S$ and $\lambda_i$ is the corresponding eigenvalue of the eigenvector $\ket{\psi_i}$.

Since $\tr(\rho_S)=1$, we have
$$
\sum_{i=1}^{\dim S} \lambda_i=1.
$$
Thus, 
$$
\tr(\rho_S^2)=\sum_{i=1}^{\dim S} \lambda_i^2.
$$
This sum is minimised when $\lambda_i=1/r$ for all $i$, where $r=\dim S$, so
$$
\tr(\rho_S^2) \geqslant r(1/r^2)=1/r.
$$
Hence,
$$
1-\tr(\rho_S^2) \leqslant 1-1/r=(r-1)/r.
$$
Thus,
$$
\mathrm{EM}_r (\ket{\psi})\leqslant 1.
$$
Since,
$$
\tr(\rho_S^2) \leqslant 1,
$$
each term in the sum in the definition of $\mathrm{EM}_r (\ket{\psi})$ is positive, which implies
$$
\mathrm{EM}_r (\ket{\psi})\geqslant 0.
$$

\end{proof}

\begin{lemma}
$$
\mathrm{EM}_r (\ket{\psi})=1
$$
if and only if 
$$
\rho_S=\frac{1}{r} \one_S
$$ 
for all subsystems $S$ where $\dim S=r$.
\end{lemma}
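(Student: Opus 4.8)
The plan is to exploit the fact that $\mathrm{EM}_r(\ket{\psi})$ is, by its definition, the arithmetic mean of the $f$ quantities $\frac{r}{r-1}\left(1-\tr(\rho_S^2)\right)$ taken over all subsystems $S$ with $\dim S = r$. The proof of Lemma~\ref{between0and1} already shows that each such quantity lies in the interval $[0,1]$, so an average of them can equal $1$ if and only if every individual term equals $1$. This immediately reduces the whole statement to a pointwise characterisation: for a fixed $S$, the corresponding term equals $1$ precisely when $\tr(\rho_S^2) = 1/r$.

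The one ingredient not already supplied by Lemma~\ref{between0and1} is the \emph{equality} condition in the bound $\tr(\rho_S^2)\geqslant 1/r$. Writing $\rho_S=\sum_i \lambda_i \ketbra{\psi_i}$ in its spectral decomposition, with $\lambda_i\geqslant 0$ and $\sum_i \lambda_i=1$ over the $r=\dim S$ eigenvalues, the Cauchy--Schwarz inequality gives $1=\left(\sum_i \lambda_i\right)^2\leqslant r\sum_i \lambda_i^2 = r\,\tr(\rho_S^2)$, hence $\tr(\rho_S^2)\geqslant 1/r$ with equality if and only if all the $\lambda_i$ are equal. Combined with the normalisation $\sum_i \lambda_i=1$, equality forces $\lambda_i=1/r$ for every $i$, which is exactly the assertion that $\rho_S=\frac{1}{r}\one_S$.

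Assembling these observations yields both directions at once. If $\rho_S=\frac{1}{r}\one_S$ for all $S$ with $\dim S=r$, then $\tr(\rho_S^2)=1/r$ for each such $S$, so every term in the defining sum equals $1$ and therefore $\mathrm{EM}_r(\ket{\psi})=1$. Conversely, if $\mathrm{EM}_r(\ket{\psi})=1$, then an average of terms each at most $1$ attains the value $1$, which is possible only if each term equals $1$; this forces $\tr(\rho_S^2)=1/r$ and hence $\rho_S=\frac{1}{r}\one_S$ for every $S$ with $\dim S=r$.

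I expect the only genuinely non-routine point to be the equality analysis, namely verifying that the minimum of $\sum_i \lambda_i^2$ subject to $\sum_i \lambda_i=1$ over the $r$ nonnegative eigenvalues is attained \emph{uniquely} at the uniform distribution. This is precisely the equality case of Cauchy--Schwarz, equivalently the strict convexity of $x\mapsto x^2$, and no difficulty beyond recording this uniqueness arises, since the reduction to a termwise statement via the averaging argument is immediate.
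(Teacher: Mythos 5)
Your proof is correct and follows essentially the same route as the paper: both reduce the statement to the equality case of the bound $\tr(\rho_S^2)\geqslant 1/r$ from Lemma~\ref{between0and1}, concluding that equality forces all eigenvalues of $\rho_S$ to equal $1/r$. Your write-up is somewhat more explicit than the paper's (spelling out the termwise averaging argument and the uniqueness of the minimiser via Cauchy--Schwarz), but the underlying argument is the same.
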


\begin{proof}
In the proof of Lemma~\ref{between0and1}, equality in the bound
$$
\mathrm{EM}_r (\ket{\psi})\leqslant 1
$$
occurs if and only if $\lambda_i=1/r$ for all $i$, which is if and only if
$$
\rho_S=\frac{1}{r}\sum_{i=1}^r \ketbra{\psi_i}=\frac{1}{r} \one_S.
$$
\end{proof}

\begin{lemma}
$$
\mathrm{EM}_r (\ket{\psi})=0
$$
if and only if for all subsystems $S$ where $\dim S=r$, there is a $\ket{\psi_1}$, describing a state on system $S$, and a $\ket{\psi_2}$, describing a state on the complement system $S'$, such that  
$$
\ket{\psi}=\ket{\psi_1}\otimes \ket{\psi_2}.
$$
\end{lemma}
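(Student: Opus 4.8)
The plan is to reduce the vanishing of the average to the vanishing of each summand, and then to characterise when a single reduced density operator is pure. Since the proof of Lemma~\ref{between0and1} already shows that every term $\frac{r}{r-1}\left(1-\tr(\rho_S^2)\right)$ is nonnegative, and since $\frac{r}{r-1}>0$, the sum $\mathrm{EM}_r(\ket{\psi})$ vanishes if and only if each summand vanishes, that is, if and only if $\tr(\rho_S^2)=1$ for every subsystem $S$ with $\dim S=r$. This first step needs nothing beyond what is already in hand.

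Next I would show that $\tr(\rho_S^2)=1$ is equivalent to $\rho_S$ being a rank-one (pure) density operator. Using the spectral decomposition $\rho_S=\sum_i \lambda_i \ketbra{\psi_i}$ with $\lambda_i\geqslant 0$ and $\sum_i \lambda_i=1$, one has $\tr(\rho_S^2)=\sum_i \lambda_i^2 \leqslant \left(\sum_i \lambda_i\right)^2=1$, with equality precisely when exactly one eigenvalue equals $1$ and the rest vanish. Hence $\tr(\rho_S^2)=1$ holds if and only if $\rho_S=\ketbra{\phi}$ for some unit vector $\ket{\phi}$ in system $S$.

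The key step is then to translate purity of $\rho_S$ into factorisation of $\ket{\psi}$ across the cut $S \mid S'$. For this I would invoke the Schmidt decomposition of the bipartite pure state $\ket{\psi}$ with respect to the splitting $\mathbb H = S \otimes S'$, writing $\ket{\psi}=\sum_k \sqrt{\mu_k}\,\ket{a_k}\ket{b_k}$ with orthonormal families $\{\ket{a_k}\}$ in $S$, $\{\ket{b_k}\}$ in $S'$, and $\mu_k>0$. Tracing out $S'$ gives $\rho_S=\sum_k \mu_k \ketbra{a_k}$, so the Schmidt rank equals the rank of $\rho_S$. Thus $\rho_S$ is pure exactly when there is a single Schmidt coefficient, i.e. $\ket{\psi}=\ket{a_1}\otimes\ket{b_1}$, which is the desired factorisation with $\ket{\psi_1}=\ket{a_1}$ and $\ket{\psi_2}=\ket{b_1}$. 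Conversely, if $\ket{\psi}=\ket{\psi_1}\otimes\ket{\psi_2}$ then $\rho_S=\ketbra{\psi_1}$ is manifestly pure and the corresponding term vanishes.

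I expect the only subtlety to be the bookkeeping between the two directions: the lemma quantifies over all subsystems $S$ of dimension $r$, so I must check that $\mathrm{EM}_r(\ket{\psi})=0$ forces purity of $\rho_S$ for every such $S$ simultaneously (which it does, by nonnegativity of the individual terms), and conversely that factorisation across each cut individually suffices to annihilate each term. No single step is deep; the genuine content lies entirely in the Schmidt-decomposition characterisation of product states, which holds verbatim for mixed dimensional bipartitions since it uses only the tensor structure $\mathbb H = S \otimes S'$ and not any equality of the local dimensions.
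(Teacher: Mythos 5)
Your proposal is correct and follows essentially the same route as the paper: nonnegativity of each summand reduces the claim to $\tr(\rho_S^2)=1$ for every $S$ of dimension $r$, and the Schmidt decomposition across the cut $S\mid S'$ converts that into the factorisation $\ket{\psi}=\ket{\psi_1}\otimes\ket{\psi_2}$. Your intermediate reformulation via purity of $\rho_S$ is a harmless reorganisation of the same computation.
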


\begin{proof}
To prove the forward implication, by Schmidt decomposition, there are orthonormal sets $\ket{\alpha_i}$ and $\ket{\beta_i}$ for $S$ and $S'$ respectively, such that
$$
\ket{\psi}=\sum_{i=1}^{\min{\{r,(\prod_{i=1}^n D_i)-r\}}} \lambda_i \ket{\alpha_i}\ket{\beta_i}.
$$
Thus,
$$
\rho=\ketbra{\psi}=\sum_{i,j} \lambda_i \overline{\lambda_j} \ket{\alpha_i}\!\ket{\beta_i}\!\bra{\alpha_j}\!\bra{\beta_j}
$$
and so
$$
\rho_S=\tr_{S'}(\rho)=\sum_k \bra{\beta_k}\left( \sum_{i,j} \lambda_i \overline{\lambda_j} \ket{\alpha_i}\!\ket{\beta_i}\!\bra{\alpha_j}\!\bra{\beta_j}
\right)\ket{\beta_k}=\sum_{k=1}^{\min{\{r,n-r\}}} \lambda_k \overline{\lambda_k} \ket{\alpha_k}\!\bra{\alpha_k}.
$$
Hence,
$$
\rho_S^2=\sum_{i=1}^{\min{\{r,n-r\}}} (\lambda_i \overline{\lambda_i})^2 \ket{\alpha_i}\!\bra{\alpha_i}
$$
and
$$
\tr(\rho_S^2)=\sum_{i=1}^{\min{\{r,n-r\}}} (\lambda_i \overline{\lambda_i})^2. 
$$
If $\mathrm{EM}_r(\ket{\psi})=0$ then we have that $\tr(\rho_S^2)=1$ for each subsystem $S$ of dimension $r$. But
$$
\tr(\rho_S^2)=1
$$
if and only if $\lambda_1=1$ and $\lambda_i=0$ for $i \geqslant 2$, which implies
$$
\ket{\psi}= \ket{\alpha_1}\ket{\beta_1}.
$$

To prove the reverse implication, suppose
$$
\ket{\psi}=\ket{\psi_1} \ket{\psi_2},
$$
where $\ket{\psi_1}$ describes a state on some subsystem $S$ of dimension $r$ and $\ket{\psi_
2}$ describes a state on the complement subsystem $S'$.

Then
$$
\rho=\ketbra{\psi}=\ket{\psi_1}\! \ket{\psi_2}\! \bra{\psi_1}\! \bra{\psi_2},
$$
so
$$
\rho_S=\ketbra{\psi_1},
$$
$$
\rho_S^2=\ketbra{\psi_1}
$$
and
$$
\tr(\rho_S^2)=1.
$$
This implies
$$
\mathrm{EM}_r(\ket{\psi})=0,
$$
since all the terms in the sum in the definition of $\mathrm{EM}_r(\ket{\psi})$ are zero.

\end{proof}

We cannot hope to prove the stronger result that $\mathrm{EM}_r (\ket{\psi})=0$ implies that $\ket{\psi}$ is a product state across all subsystems. This is easily seen by considering, for example, $\mathbb H=\mathbb C^2 \otimes \mathbb C^3 \otimes \mathbb C^5$. If we have that
$$
\mathrm{EM}_6(\ket{\psi})=0,
$$
then we can only conclude that
$$
\ket{\psi}=\ket{\psi_1}\otimes \ket{\psi_2},
$$
where $\ket{\psi_1}\in \mathbb C^2 \otimes \mathbb C^3$ can be any state.

\section{Absolutely maximally entangled states} \label{sectioname}

Following Scott \cite{Scott2004}, Goyeneche et al. \cite{GBZ2016} and Huber et al. \cite{HESG2018}, we say a state $\ket{\psi}$ is  absolutely maximally entangled if 
$$
\rho_S=\frac{1}{\dim S} \one_S
$$
for all subsystems $S$ for which 
$$
\dim S \leqslant \Delta := \sqrt{\prod_{i=1}^n D_i}.
$$
Note that for subsystems $S$ for which 
$$
\dim S> \Delta
$$
the Schmidt decomposition on systems $SS'$, where $S'$ is the complement system to $S$, implies that the rank of $\rho_S$ is at most $\dim S'$ and therefore prohibits that the marginal on $S$ be the identity operator.

More generally, we say a state $\ket{\psi}$ is maximally entangled on subsystems of dimension $r$ if 
$$
\mathrm{EM}_r (\ket{\psi})=1.
$$

For non-mixed systems, the following theorem is \cite[Proposition 3]{Scott2004}. Here, we extend this result to mixed dimensional Hilbert spaces.

\begin{theorem} \label{ameisstab}
$\ket{\psi}$ is an absolutely maximally entangled state if and only if the subspace spanned by $\ket{\psi}$ is a pure $(((D_1,\ldots,D_n),1,\lfloor \Delta \rfloor +1))$ quantum error-correcting code.
\end{theorem}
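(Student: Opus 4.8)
The plan is to reduce both sides of the equivalence to a single condition on the expectation values $\bra{\psi}E\ket{\psi}$, where $E$ ranges over the nice error basis $\mathcal E$ constructed above. Since the code is one-dimensional, purity is automatic, so asserting that the span of $\ket{\psi}$ is a $(((D_1,\ldots,D_n),1,\lfloor\Delta\rfloor+1))$ quantum error-correcting code is precisely the statement that $\bra{\psi}E\ket{\psi}=0$ for every non-identity $E\in\mathcal E$ with $\mathrm{dimwt}(E)\leqslant\lfloor\Delta\rfloor$. On the other side, because $\dim S$ is always a positive integer, the constraint $\dim S\leqslant\Delta$ in the definition of an absolutely maximally entangled state coincides with $\dim S\leqslant\lfloor\Delta\rfloor$. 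Thus both conditions range over subsystems, respectively errors, of dimension at most $\lfloor\Delta\rfloor$, and it remains to connect maximal mixedness of $\rho_S$ with the vanishing of these expectations.

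The key tool is the expansion of a reduced density operator in the local nice error basis. For a subsystem $S$, the operators $\{E_S\}$ obtained by tensoring Weyl operators over the components of $S$ form a Hilbert--Schmidt orthogonal basis with $\tr(E_S^\dagger E_S)=\dim S$, so
$$
\rho_S=\frac{1}{\dim S}\sum_{E_S}\tr(E_S^\dagger\rho_S)\,E_S,
$$
and the coefficient of $E_S$ satisfies $\tr(E_S^\dagger\rho_S)=\bra{\psi}E_S^\dagger\otimes\one_{S'}\ket{\psi}$ by the definition of the partial trace. Since every non-identity $E_S$ is traceless, its support being a non-trivial subsystem of $S$, the identity term alone contributes $\frac{1}{\dim S}\one_S$, and hence $\rho_S=\frac{1}{\dim S}\one_S$ if and only if all remaining coefficients vanish.

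For the forward implication I would assume $\ket{\psi}$ is absolutely maximally entangled and take any non-identity $E\in\mathcal E$ with support $T$ and $\mathrm{dimwt}(E)=\dim T\leqslant\lfloor\Delta\rfloor$. Writing $E=E_T\otimes\one_{T'}$ and tracing out $T'$ gives
$$
\bra{\psi}E\ket{\psi}=\tr(E_T\,\rho_T)=\frac{1}{\dim T}\tr(E_T)=0,
$$
using $\rho_T=\frac{1}{\dim T}\one_T$ and the trace-zero property of $E_T$; this is exactly the code condition. Conversely, assuming the code condition, I fix any $S$ with $\dim S\leqslant\lfloor\Delta\rfloor$ and expand $\rho_S$ as above: each non-identity $E_S$ has support $T\subseteq S$ with $\dim T\leqslant\dim S\leqslant\lfloor\Delta\rfloor$, so its coefficient $\bra{\psi}E_S^\dagger\otimes\one_{S'}\ket{\psi}$ vanishes, leaving $\rho_S=\frac{1}{\dim S}\one_S$, which is the defining property of an absolutely maximally entangled state.

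There is no deep obstacle here; the work is bookkeeping. The point requiring the most care is matching the two dimension bounds: one must check that the minimality of the support, so that dimensional weight is read off correctly, together with the integrality of $\dim S$, makes ``dimensional weight at most $\lfloor\Delta\rfloor$'' and ``subsystem dimension at most $\Delta$'' align exactly, and that $E_S^\dagger$ lies in the nice error basis up to a unit scalar so that the vanishing of its expectation is unaffected. Confirming the Hilbert--Schmidt normalisation $\tr(E_S^\dagger E_S)=\dim S$ and that the complementary factor $\one_{T'}$ does not spoil the trace-zero property are the remaining routine details.
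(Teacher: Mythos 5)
Your proof is correct, and the forward direction is essentially the paper's argument in a tidier form: you evaluate $\bra{\psi}E\ket{\psi}=\tr(E_T\rho_T)$ directly on the support $T$ of $E$, whereas the paper embeds the support in a subsystem $S$ of dimension at most $\Delta$ and expands $E$ in matrix entries before reaching the same conclusion $\frac{1}{\dim S}\tr_S(E_S)=0$. Where you genuinely diverge is the reverse direction. The paper expands the \emph{global} projector $\ketbra{\psi}=\sum_E c_E E$ over the full $n$-party nice error basis, uses the code condition to kill the coefficients with $0<\mathrm{dimwt}(E)\leqslant\Delta$, and then must still argue that every surviving term of dimensional weight exceeding $\Delta$ satisfies $\tr_{S'}(E)=0$ because its support meets $S'$. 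You instead expand the \emph{reduced} operator $\rho_S$ in the Hilbert--Schmidt orthogonal basis of local Weyl operators on $S$ alone, identify each coefficient as $\bra{\psi}E_S^\dagger\otimes\one_{S'}\ket{\psi}$ via the defining property of the partial trace, and observe that every non-identity term has dimensional weight at most $\dim S\leqslant\lfloor\Delta\rfloor$ and so vanishes by the code condition. This buys you a cleaner argument: the high-weight errors never enter, and the identification of the remaining term as $\frac{1}{\dim S}\one_S$ is immediate from the normalisation $\tr(E_S^\dagger E_S)=\dim S$. The points you flag as needing care (integrality of $\dim S$ matching $\lfloor\Delta\rfloor$ with $\Delta$, closure of the Weyl basis under adjoints up to unit scalars, and purity being forced on one-dimensional codes by the paper's definition) are exactly the right ones, and all check out.
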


\begin{proof}
Suppose $\ket{\psi}$ is absolutely maximally entangled and let $\mathcal E$ denote a nice error basis. 

Suppose that $E \in \mathcal E$ is a non-identity operator of dimensional weight less than $\lfloor \Delta \rfloor +1$. Since the dimensional weight of $E$ is an integer it is at most $\Delta$. Let $S$ be a subsystem of dimensional weight at most $\Delta$ containing the support of $E$ and let $\ket{\phi_i}$ be an orthonormal basis for $S$. Then
$$
\bra{\psi}E\ket{\psi}=\tr(E\ketbra{\psi})
$$
$$
=\tr \left( \left(\sum_{ij} e_{ij} \ket{\phi_i}\bra{\phi_j}\otimes \one_{S'}\right)\ketbra{\psi} \right)
$$
$$
=
 \sum_{ij} e_{ij} \tr_S(\tr_{S'}((\ket{\phi_i}\bra{\phi_j} \otimes \one_{S'})\ketbra{\psi})),
$$
$$
=\frac{1}{\dim S}\sum_{ij} e_{ij} \tr_S(\ket{\phi_i}\bra{\phi_j}\one_{S})
$$
$$
=\frac{1}{\dim S} \tr_S\left(\sum_{ij} e_{ij}\ket{\phi_i}\!\!\bra{\phi_j}\right)=\frac{1}{\dim S} \tr_S(E_S)=0,
$$
where $E_S$ is the restriction of $E$ to the subsystem $S$. Thus, by (\ref{kl}), $Q$ is a $(((D_1,\ldots,D_n),1,\lfloor \Delta \rfloor +1))$ quantum error-correcting code.

Suppose that the subspace $Q$ spanned by $\ket{\psi}$ is a $(((D_1,\ldots,D_n),1,\lfloor \Delta \rfloor +1))$ quantum error-correcting code. Recall that, since $\mathcal E$ is a nice error basis, we have that the operator $E$ is traceless over any subsystem of its support $S$. i.e.
$$
\tr_T(E)=0,
$$
for any subsystem for which $S \cap T$ is non-trivial. 

We can write
$$
\ketbra{\psi}=\sum_E c_EE
$$
and so for $M \in \mathcal E$
$$
M^{\dagger}\ketbra{\psi}=\sum_E c_EM^{\dagger}E=\sum_E c_{ME} E
$$
which implies
$$
\bra{\psi}M^\dagger\ket{\psi}=\tr(M^{\dagger}\ketbra{\psi})=\tr(\one)  c_M.
$$
Thus,
$$
\ketbra{\psi}=\sum_{\mathrm{dimwt}(E) \leqslant \Delta} c_EE+\sum_{\mathrm{dimwt}(E) > \Delta} c_EE
$$
$$
=\frac{1}{\tr(\one)} \one+\sum_{\mathrm{dimwt}(E)> \Delta} c_EE,
$$
since, by (\ref{kl}), $c_E=\bra{\psi}E^\dagger\ket{\psi}=0$, if $0<\mathrm{dimwt}(E) \leqslant \Delta$.

For any subsystem $S$ of dimensional weight at most $\Delta$ and $E$ of dimensional weight more than $\Delta$, we have
$$
\tr_{S'}(E)=0,
$$
since $S'$,  the complementary system to $S$, must have a non-trivial intersection with the support of $E$. Therefore,
$$
\rho_S=\tr_{S'}(\ketbra{\psi})=\frac{\tr_{S'}(\one)}{\tr(\one)}=\frac{1}{\dim S} \one_{S}.
$$

\end{proof}

\section{Constructions of codes for mixed dimensional spaces} \label{sectionexamples}

In this section we will show, in Theorem~\ref{AMEfromStab}, a method for constructing codes for mixed or constant dimension systems from codes of systems with one component less. We then use Theorem~\ref{AMEfromStab} to construct absolutely maximally entangled states in certain mixed dimensional spaces, see Theorem~\ref{amestatefrom0}. There are known iterative constructions of absolutely maximally entangled states in constant dimension Hilbert spaces, see \cite{CW2019, HG, ZAZ2020}. The idea here is to purify a given quantum error-correcting code, as done in \cite{HG}.

\begin{theorem} \label{AMEfromStab}
If $\ket{\psi_1},\ldots, \ket{\psi_r}$ is an orthonormal set of states of a pure $(((D_1,\ldots,D_n),K, D))$ quantum error-correcting code $Q$ and $\ket{ 1},\ldots,\ket{r}$ is an orthonormal basis of $\mathbb C^r$ then
$$
\ket{\phi}=\frac{1}{\sqrt{r}}\sum_{j=1}^r \ket{j}\ket{\psi_j}
$$
spans a 
$$
(((r,D_1,\ldots,D_n),1,D))
$$ 
quantum error-correcting code for all $r\leqslant K$.
\end{theorem}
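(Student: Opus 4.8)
The plan is to verify directly that $\ket{\phi}$ meets the Knill--Laflamme condition (\ref{kl}) for a pure one-dimensional code of dimensional distance $D$. Since $K=1$ forces purity, this amounts to showing $\bra{\phi} G \ket{\phi}=0$ for every non-identity element $G$ of a nice error basis of the enlarged space $\mathbb C^r \otimes \mathbb C^{D_1}\otimes\cdots\otimes \mathbb C^{D_n}$ with $\mathrm{dimwt}(G)<D$. I would use the Weyl-operator nice error basis supplied by the existence theorem, which factorizes as $G=F\otimes E$, where $F$ is a nice error basis element on $\mathbb C^r$ and $E$ is one on $\mathbb H=\mathbb C^{D_1}\otimes\cdots\otimes\mathbb C^{D_n}$. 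Expanding in the given decomposition of $\ket{\phi}$ yields the single identity
$$
\bra{\phi}(F\otimes E)\ket{\phi}=\frac{1}{r}\sum_{i,j=1}^r \bra{i}F\ket{j}\,\bra{\psi_i}E\ket{\psi_j},
$$
and the whole argument reduces to controlling the two factors under the dimensional-weight constraint.

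First I would record the elementary weight bookkeeping: the support of $F\otimes E$ meets the $\mathbb H$ block exactly in the support of $E$, so $\mathrm{dimwt}(F\otimes E)=\mathrm{dimwt}(F)\,\mathrm{dimwt}(E)\geqslant \mathrm{dimwt}(E)$. Hence $\mathrm{dimwt}(G)<D$ forces $\mathrm{dimwt}(E)<D$, which is the hook that lets me feed $E$ into the hypotheses on $Q$. Then I would split into two cases. If $E=\one$ (so $F\neq\one$), orthonormality of the $\ket{\psi_j}$ collapses the inner sum to $\tr(F)/r$, which vanishes because every non-identity nice error basis element is traceless. If $E\neq\one$, then $\mathrm{dimwt}(E)<D$ together with the \emph{purity} of $Q$ gives, via (\ref{kl}) with $c_E=0$, that $\bra{\psi_i}E\ket{\psi_j}=0$ for all $i,j$, annihilating the whole sum. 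In both cases $\bra{\phi}G\ket{\phi}=0$, which is exactly what is required.

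Two points need a little care. The states $\ket{\psi_1},\ldots,\ket{\psi_r}$ form only an orthonormal subset of $Q$, not a full basis, so to apply (\ref{kl}) I would first note that the Knill--Laflamme condition is equivalent to the basis-free statement $PEP=c_E P$ for the projector $P$ onto $Q$; restricting this to the subset immediately gives $\bra{\psi_i}E\ket{\psi_j}=c_E\delta_{ij}$, and the hypothesis $r\leqslant K$ is used only to guarantee that such an orthonormal $r$-subset exists. The step I expect to be the genuine crux -- and the reason the hypothesis that $Q$ be \emph{pure} cannot be dropped -- is the subcase $F=\one$, $E\neq\one$: there $\bra{\phi}(\one\otimes E)\ket{\phi}=c_E$ outright, so mere detectability of $E$ (which only gives $c_E\delta_{ij}$) is insufficient and one genuinely needs $c_E=0$. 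This is where the purity of the input code is consumed, and confirming that this is the only place it is needed is the main thing to check.
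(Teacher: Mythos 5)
Your proposal is correct and follows essentially the same route as the paper's proof: split according to whether the error acts non-trivially on the $\mathbb{H}$ block (where purity of $Q$ kills all matrix elements $\bra{\psi_i}E\ket{\psi_j}$) or only on the $\mathbb{C}^r$ factor (where tracelessness of the non-identity Weyl/nice-error-basis element does the work). Your two added remarks --- handling the orthonormal \emph{subset} via $PEP=c_EP$, and pinpointing that purity is only genuinely consumed when the error is trivial on the $\mathbb{C}^r$ factor --- are correct refinements of details the paper leaves implicit, not a different argument.
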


\begin{proof}
By discretisation of errors, \cite[Theorem 10.1]{NielsenChuang2000}, it suffices to prove that
$$
\bra{\phi}E \ket{\phi}=0,
$$ 
for all non-identity local operators $E=E_1 \otimes E_2$ of dimensional weight less than $D$, where $E_1$ is a Weyl operator on $\mathbb C^r$ and $E_2$ is a Weyl operator on ${\mathbb C}^{D_1} \otimes \cdots \otimes {\mathbb C}^{D_n}$. 

\underline{Case 1}
Suppose $E_2 \neq \one$. Since $Q$ is a pure $(((D_1,\ldots,D_n),K, D))$ quantum error-correcting code and the dimensional weight of $E_2$ is less than $D$, from (\ref{kl}) we have that
$$
\bra{\psi_i} E_2 \ket{\psi_j}=0,
$$
for all $i,j \in \{1,\ldots,r\}$.
Thus, 
$$
\bra{\phi}E \ket{\phi}=\frac{1}{r}\left(\sum_{i=1}^r \bra{i}\bra{\psi_i}\right)
(E_1 \otimes E_2)
\left(\sum_{j=1}^r \ket{j}\ket{\psi_j}\right)
=\frac{1}{r}\sum_{i,j=1}^r  \bra{i}E_1 \ket{j}\bra{\psi_i}E_2 \ket{\psi_j}=0.
$$

\underline{Case 2}
Suppose $E_2=\one$.  The Weyl operator $E_1=X(a)Z(b)$, for some $a,b \in \mathbb Z/r\mathbb Z$. We have that
$$
\bra{\phi}E \ket{\phi}=\bra{\phi}(X(a)Z(b) \otimes \one) \ket{\phi}=\left(\sum_{i=1}^r \bra{i}\bra{\psi_i}\right)\left(\sum_{j=1}^r \eta^{jb}\ket{j+a}\ket{\psi_j}\right)
$$
$$
=\sum_{j=1}^r \bra{\psi_{j+a}} \eta^{jb}\ket{\psi_{j}}=\sum_{j=1}^r  \eta^{jb}\bra{\psi_{j+a}}\ket{\psi_{j}},
$$
where $\eta$ is a primitive $r$-th root of unity.

If $a \neq 0$ then
$\bra{\phi}E \ket{\phi}=0$ since $\bra{\psi_{j+a}}\ket{\psi_{j}}=0$. 

If $a=0$ and $b \neq 0$ then
$\bra{\phi}E \ket{\phi}=0$ since $\sum_{i=1}^r  \eta^{ib}=0$.

\end{proof}

Recall that the quantum Singleton bound (\ref{qsb}) states that for a quantum error-correcting code $Q$ of $(\mathbb C^q)^{\otimes n}$ 
of dimension $q^k$, the minimum distance $d$ satisfies the inequality
$$
n \geqslant k+2(d-1).
$$
The Singleton bound implies, in the case that the local dimension is a fixed $q$ and $r=q$, that the dimension $K$ in Theorem~\ref{amestatefrom0} satisfies
$$
r=q \leqslant K=q^k \leqslant q^{n-2(\lceil \frac{n+1}{2} \rceil-1)},$$
which implies $n$ is odd and $K=q$.

More generally, Theorem~\ref{singleton} will give an upper bound on $K$.

The following theorem is inspired by \cite[Proposition 7]{HG}, where a similar construction was used in the case where the local dimensional is constant.

\begin{theorem} \label{amestatefrom0}
If $\ket{\psi_1},\ldots, \ket{\psi_r}$ is an orthonormal set of states of a pure 
$$
(((D_1,\ldots,D_n),K, \lceil \sqrt{rD_1\cdots D_n} \rceil ))
$$ 
quantum error-correcting code then 
$$
\ket{\phi}=\frac{1}{\sqrt{r}}\sum_{j=1}^r \ket{j}\ket{\psi_j}
$$
is an absolutely maximally entangled state of $\mathbb C^r \otimes{\mathbb C}^{D_1} \otimes \cdots \otimes {\mathbb C}^{D_n}$ for all $2 \leqslant r\leqslant K$, where $\ket{ 1},\ldots,\ket{r}$ is an orthonormal basis of $\mathbb C^r$.

\end{theorem}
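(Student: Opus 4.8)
The plan is to read this off from the two results just proved, with $\Delta := \sqrt{rD_1\cdots D_n}$ denoting the entanglement threshold of the enlarged space $\mathbb C^r \otimes \mathbb C^{D_1}\otimes\cdots\otimes\mathbb C^{D_n}$. By Theorem~\ref{ameisstab} applied to this space, $\ket{\phi}$ is absolutely maximally entangled if and only if the line it spans is a $(((r,D_1,\ldots,D_n),1,\lfloor\Delta\rfloor+1))$ quantum error-correcting code. So the entire statement reduces to exhibiting such a code from the given data.

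First I would invoke Theorem~\ref{AMEfromStab}. The hypothesis supplies an orthonormal set $\ket{\psi_1},\ldots,\ket{\psi_r}$ of codewords of a pure $(((D_1,\ldots,D_n),K,\lceil\Delta\rceil))$ code with $2\leqslant r\leqslant K$, so taking $D=\lceil\Delta\rceil$ in that theorem shows immediately that $\ket{\phi}$ spans a $(((r,D_1,\ldots,D_n),1,\lceil\Delta\rceil))$ code; equivalently, every non-identity error of dimensional weight strictly less than $\lceil\Delta\rceil$ is detected. It then remains only to compare the distance $\lceil\Delta\rceil$ delivered here with the distance $\lfloor\Delta\rfloor+1$ demanded by Theorem~\ref{ameisstab}. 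When $\Delta\notin\mathbb Z$ these coincide by the elementary identity $\lceil\Delta\rceil=\lfloor\Delta\rfloor+1$, the two theorems splice together, and the detected weights are exactly the integers $\leqslant\lfloor\Delta\rfloor$, i.e. precisely the marginals $\dim S\leqslant\Delta$ that the definition of absolute maximal entanglement asks to be maximally mixed.

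The hard part, and the step I would expect to fight with, is the boundary case in which $\Delta=m$ is an integer, that is when $rD_1\cdots D_n$ is a perfect square. There $\lceil\Delta\rceil=m$ while $\lfloor\Delta\rfloor+1=m+1$, so Theorem~\ref{AMEfromStab} only certifies detection up to weight $m-1$, whereas absolute maximal entanglement additionally requires $\rho_S=\tfrac1m\one_S$ on every balanced bipartition $\dim S=\dim S'=m$, equivalently detection of the weight-$m$ errors whose support is exactly such an $S$. To attack this I would trace out the reference $\mathbb C^r$ first, writing $\rho_S=\tfrac1r\sum_{j=1}^r\tr_{S'}\ketbra{\psi_j}$ for $S\subseteq\{1,\ldots,n\}$, and try to force maximal mixedness from purity of the underlying code together with the orthonormality of the $\ket{\psi_j}$; the obstruction is that this reduces to showing $\sum_j\bra{\psi_j}E_S\ket{\psi_j}=0$ for weight-$m$ operators $E_S$, which the distance-$m$ hypothesis does not by itself guarantee. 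Barring a successful argument of this kind, the honest reading is that in the perfect-square case the input code should be required to have distance $\lfloor\Delta\rfloor+1$ rather than $\lceil\Delta\rceil$; for the generic non-square case the clean two-step argument above already suffices.
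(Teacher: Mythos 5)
Your two-step reduction (Theorem~\ref{AMEfromStab} to get a $(((r,D_1,\ldots,D_n),1,\lceil\Delta\rceil))$ code, then Theorem~\ref{ameisstab} with $\lceil\Delta\rceil=\lfloor\Delta\rfloor+1$) correctly disposes of the case where $rD_1\cdots D_n$ is not a perfect square, and matches the first half of the paper's argument. But the boundary case you flag is a genuine gap in your proposal, and your proposed resolution --- strengthening the hypothesis to distance $\lfloor\Delta\rfloor+1$ when $\Delta\in\mathbb Z$ --- is not needed: the theorem is true as stated, and the paper closes exactly this case with an idea your attempt is missing. The mistake is trying to compute $\rho_S$ directly for a half $S\subseteq\{1,\ldots,n\}$ of a balanced bipartition; as you correctly observe, that route demands detection of weight-$\Delta$ errors on the original code, which the hypothesis does not supply.

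The paper instead relabels so that $S$ is the half \emph{containing} the new reference factor $\mathbb C^r$; then $S'$ lies entirely inside the original $n$ systems and $\dim S'=\Delta$. Expanding $\ketbra{\phi}=\frac{1}{\tr(\one)}\one+\sum_E c_E E$ over the nice error basis, every surviving non-identity term has $E=E_1\otimes E_2$ with $\mathrm{dimwt}(E_2)\geqslant\Delta$ and $\mathrm{supp}(E_2)\subseteq\{1,\ldots,n\}$ (terms with $\mathrm{dimwt}(E_2)<\Delta$ have vanishing coefficient by Theorem~\ref{AMEfromStab}). If $\mathrm{supp}(E_2)$ were disjoint from $S'$, then $(\dim S')\,\mathrm{dimwt}(E_2)\leqslant D_1\cdots D_n$, contradicting $(\dim S')\,\mathrm{dimwt}(E_2)\geqslant\Delta^2=rD_1\cdots D_n>D_1\cdots D_n$ --- note this is precisely where the hypothesis $r\geqslant 2$ enters, a point absent from your write-up. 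Hence $\tr_{S'}(E)=0$ for every such term, giving $\rho_S=\frac{1}{\Delta}\one_S$; and since $\ket{\phi}$ is pure, $\rho_{S'}$ has the same spectrum as $\rho_S$ and, the two halves having equal dimension $\Delta$, is therefore also $\frac{1}{\Delta}\one_{S'}$. This Schmidt/purity transfer delivers exactly the marginal your direct computation could not reach, so no extra hypothesis on the input code is required.
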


\begin{proof}

Let $E=E_1 \otimes E_2$ be a local operator of dimensional weight at most 
$$
\Delta=\sqrt{rD_1\cdots D_n},$$
where $E_1$ is a Weyl operator on $\mathbb C^r$ and $E_2$ is a Weyl operator on ${\mathbb C}^{D_1} \otimes \cdots \otimes {\mathbb C}^{D_n}$. 

If $E_2$ has dimensional weight less than $\Delta$ then
$$
\bra{\psi_i} E_2 \ket{\psi_j}=0,
$$
for all $i,j \in \{1,\ldots,r\}$.
Thus, 
$$
\bra{\phi}E \ket{\phi}=0,
$$
as in the proof of Theorem~\ref{AMEfromStab}.

As in the proof of Theorem~\ref{ameisstab}, we have that
$$
\ketbra{\phi}=\frac{1}{\tr(\one)} \one+\sum_{\mathrm{dimwt}(E_2)\geqslant \Delta} c_EE.
$$

Let $S$ be a subsystem of dimension less than $\Delta$. Then $S'$, the complement system to $S$, has dimension larger than $\Delta$. Suppose $E$ is an error of the error basis such that $\mathrm{dimwt}(E_2)\geqslant \Delta$. Since $S$ has dimension strictly less than the dimensional weight of $E_2$, $S'$ must have non-trivial intersection with the support of $E_2$. Thus, $\tr_{S'}(E)=0$, for all such $E$. Hence,
$$
\rho_S(\ketbra{\phi})=\frac{1}{\dim S} \one_S.
$$

Let $S$ be a subsystem of dimension $\Delta$. (It is possible that no such subsystem exists in which case we have already proved that $\ket{\phi}$ is an absolutely maximally entangled state). Let $S'$ denote the complement system to $S$, which also has dimension $\Delta$.
By Schmidt decomposition \cite[Theorem 2.7]{NielsenChuang2000}, $\rho_S(\ketbra{\psi})$ and $\rho_{S'}(\ketbra{\psi})$ have the same set of eigenvalues. Without loss of generality, suppose $S$ contains the first subsystem. Then $S'$ must have non-trivial intersection with the support of $E_2$, for all $E=E_1 \otimes E_2$ where $\mathrm{dimwt}(E_2)\geqslant \Delta$, since 
$$
(\dim S')( \mathrm{dimwt}(E_2)) \geqslant \Delta^2 > D_1 \cdots D_n.
$$ 
Hence,
$$
\rho_S(\ketbra{\phi})=\frac{1}{\dim S} \one_S.
$$
Since $\rho_{S'}(\ketbra{\phi})$ has the same set of eigenvalues as $\rho_S(\ketbra{\phi})$, we also have that
$$
\rho_{S'}(\ketbra{\phi})=\frac{1}{\dim S'} \one_{S'}.
$$
Thus, $\ket{\phi}$ is an absolutely maximally entangled state.
\end{proof}

In the mixed dimensional notation, a $[\![n,1,\frac{1}{2}(n+1)]\!]_q$ quantum MDS code is a 
$$
(((\underbrace{q,\ldots,q}_{n \ \mathrm{times}}),q,q^{\frac{1}{2}(n+1)}))
$$ 
code. In \cite[Theorem 2]{Rains1999b}, Rains proved that a quantum MDS code is pure. 
The following corollary can be applied to these codes and we will consider some examples of this application.

\begin{corollary} \label{amestatefrom1}
If $\ket{\psi_1},\ldots, \ket{\psi_r}$ is an orthonormal set of states of a pure
$$
(((\underbrace{R,\ldots,R}_{n \ \mathrm{times}}),R,R^{\frac{1}{2}(n+1)}))
$$ 
code of $(\mathbb C^R)^{\otimes n}$ then 
$$
\ket{\phi}=\frac{1}{\sqrt{r}}\sum_{j=1}^r \ket{j}\!\ket{\psi_j}
$$
is an absolutely maximally entangled state  of $\mathbb C^r \otimes (\mathbb C^R)^{\otimes n}$ for all $r\leqslant R$, where $\ket{ 1},\ldots,\ket{r}$ is an orthonormal basis of $\mathbb C^r$.

\end{corollary}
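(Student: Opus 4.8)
The plan is to obtain this statement as a direct specialisation of Theorem~\ref{amestatefrom0} with $D_1=\cdots=D_n=R$ and $K=R$. First I would record that the hypothesised code is genuinely pure: by Rains \cite{Rains1999b}, as quoted in the paragraph preceding the corollary, every quantum MDS code is pure, so the $(((R,\ldots,R),R,R^{(n+1)/2}))$ code satisfies the purity hypothesis demanded by Theorem~\ref{amestatefrom0}.

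The one point that needs verification is that the distance of the given code is large enough to meet the distance hypothesis of Theorem~\ref{amestatefrom0}. With the above choice of parameters, that theorem requires a pure code of dimensional distance $\lceil \sqrt{r D_1 \cdots D_n} \rceil = \lceil \sqrt{r R^n} \rceil$. Since $r \leqslant R$, we have
$$
\sqrt{r R^n} = \sqrt{r}\, R^{n/2} \leqslant \sqrt{R}\, R^{n/2} = R^{(n+1)/2}.
$$
Because $R^{(n+1)/2}$ is the dimensional distance of an honest $[\![n,1,\tfrac{1}{2}(n+1)]\!]_R$ code it is an integer, and hence $\lceil \sqrt{r R^n} \rceil \leqslant R^{(n+1)/2}$.

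It then remains only to invoke the monotonicity of the distance parameter: a pure $(((R,\ldots,R),R,D))$ code detects, by definition, every error of $\mathcal E$ of dimensional weight less than $D$, and therefore also every error of dimensional weight less than any $D' \leqslant D$; purity is the condition $c_E=0$ for $E\neq\one$, which is independent of the nominal distance. Consequently the given code is simultaneously a pure $(((R,\ldots,R),R,\lceil \sqrt{r R^n} \rceil))$ code, and Theorem~\ref{amestatefrom0} applies verbatim to conclude that $\ket{\phi}$ is absolutely maximally entangled on $\mathbb C^r \otimes (\mathbb C^R)^{\otimes n}$ for every $2 \leqslant r \leqslant R$ (the value $r=1$ being degenerate, as then $\mathbb C^r$ is a trivial factor). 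I do not expect any genuine obstacle in this deduction; the only care required is the integrality bookkeeping that controls the ceiling and the observation that shrinking the nominal distance preserves both the code and its purity.
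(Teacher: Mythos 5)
Your proposal is correct and follows the same route as the paper, which simply states that the corollary is a direct consequence of Theorem~\ref{amestatefrom0}; you merely make explicit the bookkeeping (purity via Rains, the bound $\lceil\sqrt{rR^n}\rceil\leqslant R^{(n+1)/2}$, and the fact that a pure code of distance $D$ is a pure code of any smaller distance) that the paper leaves implicit. No gaps.
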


\begin{proof}
This follows as a direct corollary to Theorem~\ref{amestatefrom0}.
\end{proof}

In the following examples we say that $Q(\mathcal S)$ is Pauli-stabilised if $\mathcal S$ is an abelian subgroup of Pauli operators. i.e. the Ketkar et al \cite{KKKS2006} definition of stabilised as mentioned in Section~\ref{sectionerrors}.

\begin{example} \label{pauliexample}
Consider the $[\![3,1,2]\!]_3$ Pauli-stabilised code of $\mathbb C^3 \otimes \mathbb C^3 \otimes \mathbb C^3$
$$
Q(\mathcal S)=\left\langle \frac{1}{\sqrt{3}}(\ket{000}+\ket{111}+\ket{222}),  \frac{1}{\sqrt{3}}(\ket{102}+\ket{210}+\ket{021}),  \frac{1}{\sqrt{3}}(\ket{201}+\ket{120}+\ket{012}) \right\rangle, 
$$
where $\mathcal S$ is the abelian subgroup generated by
$$
X(1)\otimes X(1)\otimes X(1), \ \ Z(1)\otimes Z(1)\otimes Z(1).
$$
By Corollary~\ref{amestatefrom1},  the state
$$
\ket{\phi_{23}}=\frac{1}{\sqrt{6}} (\ket{0000}+\ket{0111}+\ket{0222}+\ket{1102}+\ket{1210}+\ket{1021})
$$
is an absolutely maximally entangled state of 
$\mathbb C^2 \otimes \mathbb C^3 \otimes \mathbb C^3 \otimes \mathbb C^3$. This is easily verified by taking partial traces on any two of the qutrit systems of $\ketbra{\phi_{23}}$.

The state $\ket{\phi_{23}}$ is stabilised by the nine unitary (Pauli) operators $\one \otimes M$, where $M\in \mathcal S$. Let
$$
M_0= X \otimes (\ket{000} \mapsto \ket{102} \mapsto \ket{111} \mapsto \ket{210} \mapsto \ket{222} \mapsto \ket{021}  \mapsto )
$$ 
where $M_0$ acts as $X \otimes  \one$ on the remaining elements in the computational basis.
Then $M_0$ is a permutation operator of all the elements in the computational basis, has zero trace, order $6$ and commutes with $\one \otimes M$, for all $M\in \mathcal S$.  Thus, we can find an abelian subgroup $\mathcal S_{\mathrm{ext}}$ of size $54$ which stabilises $\ket{\phi_{23}}$, all of whose non-identity operators, have zero trace. By Theorem~\ref{dimensiontracetheorem}, $Q(\mathcal S_{\mathrm{ext}})$ is a one-dimensional code spanned by $\ket{\phi_{23}}$.

One can check directly by calculating the partial traces on any two component system that the state
$$
\ket{\phi_{33}}=\frac{1}{3} (\ket{0000}+\ket{0111}+\ket{0222}+\ket{1102}+\ket{1210}+\ket{1021}
+\ket{2201}+\ket{2120}+\ket{2012})
$$
is an absolutely maximally entangled state of 
$\mathbb C^3 \otimes \mathbb C^3 \otimes \mathbb C^3 \otimes \mathbb C^3$, as implied by Corollary~\ref{amestatefrom1}. We observe that it is Pauli-stabilised by the subgroup
generated by
$$
\one \otimes X(1)\otimes X(1)\otimes X(1), \ \ \one \otimes Z(1)\otimes Z(1)\otimes Z(1), \ \ X(1)\otimes X(2)\otimes X(1) \otimes \one,
$$
so the state $\ket{\phi_{33}}$ spans a $[[4,0,3]]_3$ stabiliser code.
\end{example}

The reason that the minimum distance increases in the example of the $[[4,0,3]]_3$ stabiliser code and in the following  $[[4,0,3]]_5$ stabiliser code follows from the argument in the proof of Theorem~\ref{amestatefrom0} where $\dim S=\Delta$. This curious behaviour, due to the Schmidt decomposition, was previously noted for constant dimensional spaces in  \cite[Proposition 7]{HG}.

\begin{example} \label{pauliexample2}
Consider the $[\![3,1,2]\!]_5$ Pauli-stabilised code of $\mathbb C^5 \otimes \mathbb C^5 \otimes \mathbb C^5$
$$
Q(\mathcal S)=\left\langle \frac{1}{\sqrt{5}}(\ket{000}+\ket{113}+\ket{221}+\ket{334}+\ket{442}), \frac{1}{\sqrt{5}}(\ket{023}+\ket{131}+\ket{244}+\ket{302}+\ket{410}), \right.
$$
$$
 \frac{1}{\sqrt{5}}(\ket{032}+\ket{140}+\ket{203}+\ket{311}+\ket{424}), 
  \frac{1}{\sqrt{5}}(\ket{041}+\ket{104}+\ket{212}+\ket{320}+\ket{433}) ,
  $$
  $$
\left.   \frac{1}{\sqrt{5}}(\ket{014}+\ket{122}+\ket{230}+\ket{343}+\ket{401}) 
\right\rangle, 
$$
where $\mathcal S$ is the abelian subgroup generated by
$$
X(1)\otimes X(1)\otimes X(3), \ \ Z(1)\otimes Z(1)\otimes Z(1).
$$
By Corollary~\ref{amestatefrom1},  the state
$$
\ket{\phi_{25}}=\frac{1}{\sqrt{10}} (\ket{0000}+\ket{0113}+\ket{0221}+\ket{0334}+\ket{0442}
+\ket{1023}+\ket{1131}+\ket{1244}+\ket{1302}+\ket{1410})
$$
is an absolutely maximally entangled state of 
$\mathbb C^2 \otimes \mathbb C^5 \otimes \mathbb C^5 \otimes \mathbb C^5$. Again, this can be easily verified by taking partial traces on any two of the ququint systems of $\ketbra{\phi_{25}}$.

The state $\ket{\phi_{25}}$ is stabilised by the $25$ unitary (Pauli) operators $\one \otimes M$, where $M\in \mathcal S$. Let
$$
M_0= X \otimes 
$$
$$
(\ket{000} \mapsto \ket{023} \mapsto \ket{113} \mapsto \ket{131} \mapsto \ket{221} \mapsto \ket{244}  \mapsto  \ket{334} \mapsto \ket{302} \mapsto \ket{442} \mapsto \ket{410}  \mapsto),
$$ 
where $M_0$ acts as $X \otimes  \one$ on the remaining elements in the computational basis.
Then $M_0$ is a permutation operator of all the elements in the computational basis, has zero trace, order $10$ and commutes with $\one \otimes M$, for all $M\in \mathcal S$.  Thus, we can find an abelian subgroup $\mathcal S_{\mathrm{ext}}$ of size $250$ which stabilises $\ket{\phi_{25}}$, all of whose non-identity operators, have zero trace. By Theorem~\ref{dimensiontracetheorem}, $Q(\mathcal S_{\mathrm{ext}})$ is a one-dimensional code spanned by $\ket{\phi_{25}}$.

Furthermore, again by Corollary~\ref{amestatefrom1},  the state
$$
\ket{\phi_{35}}=\frac{1}{\sqrt{15}} (\ket{0000}+\ket{0113}+\ket{0221}+\ket{0334}+\ket{0442}
+\ket{1023}+\ket{1131}+\ket{1244}+\ket{1302}+\ket{1410}
$$
$$
+\ket{2032}+\ket{2140}+\ket{2203}+\ket{2311}+\ket{2424}),
$$
is an absolutely maximally entangled state of $\mathbb C^3 \otimes \mathbb C^5 \otimes \mathbb C^5 \otimes \mathbb C^5$. 

and the state
$$
\ket{\phi_{45}}=\frac{1}{\sqrt{20}} (\ket{0000}+\ket{0113}+\ket{0221}+\ket{0334}+\ket{0442}
+\ket{1023}+\ket{1131}+\ket{1244}+\ket{1302}+\ket{1410}
$$
$$
+\ket{2032}+\ket{2140}+\ket{2203}+\ket{2311}+\ket{2424}
+\ket{3041}+\ket{3104}+\ket{3212}+\ket{3320}+\ket{3433}),
$$
is an absolutely maximally entangled state of $\mathbb C^4 \otimes \mathbb C^5 \otimes \mathbb C^5 \otimes \mathbb C^5$.

One can check directly by calculating the partial traces on any two component system that the state
$$
\ket{\phi_{55}}=\frac{1}{5} (\ket{0000}+\ket{0113}+\ket{0221}+\ket{0334}+\ket{0442}
+\ket{1023}+\ket{1131}+\ket{1244}+\ket{1302}+\ket{1410}
$$
$$
+\ket{2032}+\ket{2140}+\ket{2203}+\ket{2311}+\ket{2424}
+\ket{3041}+\ket{3104}+\ket{3212}+\ket{3320}+\ket{3433}
$$
$$
+\ket{4014}+\ket{4122}+\ket{4230}+\ket{4343}+\ket{4401}),
$$
is an absolutely maximally entangled state of 
$\mathbb C^5 \otimes \mathbb C^5 \otimes \mathbb C^5 \otimes \mathbb C^5$. It is Pauli-stabilised by the subgroup generated by
$$
\one \otimes X(1)\otimes X(1)\otimes X(3), \ \ \one \otimes Z(1)\otimes Z(1)\otimes Z(1), \ \ X(1)\otimes X(1)\otimes X(3) \otimes X(1),
$$
so the state $\ket{\phi_{55}}$ spans a $[[4,0,3]]_5$ stabiliser code.
\end{example}

\section{A further example} \label{sectionhuber}

In \cite{HESG2018}, Huber et al. gave an example of an absolutely maximally entangled state on $\mathbb H=\mathbb C^2 \otimes \mathbb C^3 \otimes \mathbb C^3 \otimes \mathbb C^3$. 
We give a similar example and will prove that it is an absolutely maximally entangled state by proving that the code spanned by $\ket{\psi}$ can detect all errors of dimensional weight at most $6$. Since $\Delta=\sqrt{54}$, we can apply Theorem~\ref{ameisstab} to prove that $\ket{\psi}$ is an absolutely maximally entangled state. In difference to Example~\ref{pauliexample}, one can easily verify that there are no Pauli operators stabilising the state. This can be done by verifying the there are no $X$ or $Z$ operators on the qutrit systems which fix the elements of the computational basis which have a $\ket{0}$ on the qubit system or which map these elements to elements of the computational basis which have a $\ket{1}$ on the qubit system. There is only a subgroup of order $4$ of local operators fixing the state, generated by $M_0$. However, the state is a stabilised state of an abelian group of permutations up to scalar factors. In many cases, these permutations have non-zero trace.

\begin{example} \label{notfelix}
Let
$$
\ket{\psi}=\frac{1}{\sqrt{12}}(\ket{0}(\ket{022}+\ket{201}+\ket{120}+\ket{011}+\ket{102}+\ket{210}))
$$
$$
+
\ket{1}(-\ket{101}+\ket{110}-\ket{012}+\ket{202}-\ket{220}+\ket{021}).
$$

Firstly, we prove that the subspace spanned by $\ket{\psi}$ is stabilised by an abelian subgroup $\mathcal S$ of permutations of the computational basis elements, up to multiplying by a root of unity.

Let
$$
M_0=\one \otimes \left(\begin{array}{ccc} -1 & 0 & 0 \\ 0 & i & 0 \\ 0 & 0 & i \end{array} \right)\otimes \left(\begin{array}{ccc} -1 & 0 & 0 \\ 0 & i & 0 \\ 0 & 0 & i \end{array} \right)\otimes \left(\begin{array}{ccc} -1 & 0 & 0 \\ 0 & i & 0 \\ 0 & 0 & i \end{array} \right)
$$
and observe that $M_0 \ket{\psi}=\ket{\psi}$. 

To be able to state the other stabilisers of $\ket{\psi}$ we define an orthogonal set for $\mathbb H$,
$$
\begin{array}{llll}
\ket{\phi_1}=\ket{0}\ket{022} & \ket{\phi_7}=-\ket{1}\ket{021}
 & \ket{\phi_{13}}=\ket{1}\ket{022} & \ket{\phi_{19}}=\ket{0}\ket{021}\\
\ket{\phi_2}=\ket{0}\ket{210} & \ket{\phi_8}=\ket{1}\ket{220}
& \ket{\phi_{14}}=\ket{1}\ket{210} & \ket{\phi_{20}}=-\ket{0}\ket{220}\\
\ket{\phi_3}=\ket{0}\ket{102} & \ket{\phi_9}=-\ket{1}\ket{202}
& \ket{\phi_{15}}=\ket{1}\ket{102} & \ket{\phi_{21}}=\ket{0}\ket{202}\\
\ket{\phi_4}=\ket{0}\ket{011} & \ket{\phi_{10}}=\ket{1}\ket{012}
& \ket{\phi_{16}}=\ket{1}\ket{011} & \ket{\phi_{22}}=-\ket{0}\ket{012}\\
\ket{\phi_5}=\ket{0}\ket{120} & \ket{\phi_{11}}=-\ket{1}\ket{110}
& \ket{\phi_{17}}=\ket{1}\ket{120} & \ket{\phi_{23}}=\ket{0}\ket{110}\\
\ket{\phi_6}=\ket{0}\ket{201} & \ket{\phi_{12}}=\ket{1}\ket{101}
& \ket{\phi_{18}}=\ket{1}\ket{201} & \ket{\phi_{24}}=-\ket{0}\ket{101}.\\
\end{array}
$$

To define $M_1$ we use the notation $1_2 \mapsto 2_3$ to mean that $\ket{1} $ on the second system gets mapped to $\ket{2}$ on the third system, etc. Using this notation
$$
M_1= Z \otimes (1_3 \mapsto 1_2 \mapsto 1_4 \mapsto 2_3 \mapsto 2_2 \mapsto 2_4 \mapsto) (0_3 \mapsto 0_2 \mapsto 0_4 \mapsto).
$$
As a permutation of the orthogonal set $\ket{\phi_j}$ this is the permutation (on the indices)
$$
(1\ 2\ 3\ 4\ 5\ 6\ )( 7\ 8\ 9\ 10\ 11\ 12)
(13\ -\!\!14\ 15\ -\!\!16\ 17\ -\!\!18 )(19\ -\!\!20\ 21\ -\!\!22\ 23\ -\!\!24).
$$
where $-14$ indicates $-\ket{\phi_{14}}$.

Note that $M_1$ fixes $\ket{\psi}$ since
$$
\ket{\psi}=\frac{1}{\sqrt{12}} \sum_{j=1}^{12} \ket{\phi_j}.
$$

Using the same notation, we define
$$
M_2=ZX \otimes (0_22_32_4 \mapsto 0_22_31_4 \mapsto -0_21_31_4 \mapsto 0_21_32_4 \mapsto)$$
$$
 (1_20_32_4 \mapsto 2_20_32_4 \mapsto -2_20_31_4 \mapsto 1_20_31_4 \mapsto)
(1_22_30_4 \mapsto 1_21_30_4 \mapsto -2_21_30_4 \mapsto 2_22_30_4 \mapsto)  
$$
and identity on the remaining elements in the computational basis.

As a permutation of the orthonormal set $\ket{\phi_j}$ this is the permutation (again on the indices)
$$
(1\ 7\ 4\ 10)( 2\ 8\ 5\ 11)(3\ 9\  6\ 12)(13\  19\ 16\ 22)(14\ 20\  17\ 23)(15\  21\ 18\ 24).
$$
Hence, $M_2$ also fixes $\ket{\psi}$.

Since $M_0$ is a local operator whose components are diagonal matrices it commutes with both $M_1$ and $M_2$. Furthermore,
$$
M_1M_2=M_2M_1=(1\ 8\ 6\ 7\ 5\ 12\ 4\ 11\ 3\ 10\ 2\ 9)(13\ -\!\!20 \ 18\ -\!\!19\ 17\ -\!\!24\ 16\ -\!\!23\ 15\ -\!\!22\ 14\ -\!\!21 )
$$
and $M_2$ acts as the identity on the remaining kets in the computational basis.

We can then define an abelian (commutative) subgroup
$$
\mathcal S=\langle M_0,M_1,M_2 \rangle
$$
of unitary operators on $\mathbb H$ which are permutations of the computational basis up to unit scalars.

Our next step is to prove that the subspace $Q(\mathcal S)$ of states which are stabilised by $S$ is the one-dimensional subspace spanned by $\ket{\psi}$. To do this, we calculate the trace of all the elements of $S$ and apply Theorem~\ref{dimensiontracetheorem}.

Since $M_0$ has order $4$, $M_1$ has order $6$, $M_2$ has order $4$ and $\mathcal S$ is abelian, we have that $|\mathcal S|=96$.

The 48 operators with non-zero trace are listed in the following table. Note that, since $M_1$ is the traceless $Z$ operator on the qubit system, all other operators in $\mathcal S$ will have trace zero. 

$$
\begin{array}{|c|c|c|c|c|c|c|c|}
\hline
 & \tr &  & \tr &  & \tr &  & \tr \\ \hline
 \one & 54 & M_0 & 2(2i-1)^3 & M_2 & 30 & M_0M_2 & -4i-2\\
M_0^2 & -2 & M_0^3 & 2(-2i-1)^3 & M_0^2M_2& -26 & M_0^3M_2 & 4i-2\\
M_1^2 & 6 & M_0M_1^2 & -4i-2 & M_1^2M_2 & 6 & M_0M_1^2M_2 & -4i-2\\
M_0^2M_1^2 & -2 & M_0^3M_1^2 & 4i-2 & M_0^2M_1^2M_2 & -2 & M_0^3M_1^2M_2 & 4i-2\\
M_1^4 & 6 & M_0 M_1^4& -4i-2 & M_1^4M_2 & 6 & M_0M_1^4M_2 & -4i-2\\
M_0^2M_1^4  & -2 & M_0^3M_1^4  & 4i-2 & M_0^2M_1^4 M_2 & -2 & M_0^3M_1^4 M_2 & 4i-2\\
M_2^2 & 30 & M_0M_2^2 & -4i-2 & M_2^3 & 30 & M_0M_2^3 & -4i-2\\
M_0^2M_2^2 & -26 & M_0^3M_2^2 & 4i-2 & M_0^2M_2^3& -26 & M_0^3M_2^3 & 4i-2\\
M_1^2M_2^2 & 6 & M_0M_1^2M_2^2 & -4i-2 & M_1^2M_2^3 & 6 & M_0M_1^2M_2^3 & -4i-2\\
M_0^2M_1^2M_2^2  & -2 & M_0^3M_1^2M_2^2 &4i-2 & M_0^2M_1^2M_2^3 & -2 & M_0^3M_1^2M_2^3& 4i-2\\
M_1^4M_2^2  & 6 & M_0M_1^4M_2^2 & -4i-2 & M_1^4M_2^3& 6 & M_0M_1^4M_2^3 & -4i-2\\
M_0^2M_1^4M_2^2  & -2 & M_0^3M_1^4M_2^2 & 4i-2 & M_0^2M_1^4M_2^3 & -2 & M_0^3M_1^4M_2^3& 4i-2\\ \hline
   \end{array}
$$

By Theorem~\ref{dimensiontracetheorem}, the dimension of $Q(\mathcal S)$ is (summing the sums of the four columns)
$$
(72+24+24-24)/96=1
$$
since 
$$
2(2i-1)^3+2(-2i-1)^3=44.
$$

Since we already observed that $\ket{\psi}$ is stabilised by the elements of $\mathcal S$, we conclude that
$$
Q(\mathcal S)= \langle \ket{\psi} \rangle,
$$
where
$$
\ket{\psi}=\frac{1}{\sqrt{12}}(\ket{0}(\ket{022}+\ket{201}+\ket{120}+\ket{011}+\ket{102}+\ket{210}))
$$
$$
+
\ket{1}(-\ket{101}+\ket{110}-\ket{012}+\ket{202}-\ket{220}+\ket{021}).
$$

It remains to prove that
$$
\bra{\psi}E \ket{\psi}=0
$$
for all operators $E$ of dimensional weight less than $\sqrt{54}$. 

By symmetry on the second, third and fourth particles, it suffices to consider
$$
E \in \{ \sigma_1 \otimes \one \otimes \one \otimes \one  , \
\one   \otimes \sigma_2\otimes \one \otimes \one , \
\sigma_1 \otimes \sigma_2 \otimes \one \otimes \one   \},
$$
where $\sigma_1$ is a Pauli operator on the qubit system and $\sigma_2$ is a Pauli operator on the first qutrit system.

Since the error operators are also permutations of the computational basis elements, up to unit scalars, it suffices to calculate the intersection of the image of $E$ on the support of $\ket{\psi}$, with the support of $\ket{\psi}$.
Using the same notation as above, the support of $\ket{\psi}$ is $(1,\ldots,12)$ and we can calculate the following table.

\begin{center}
\begin{tabular}{|c|c|c|} \hline
$E$ & intersection with $(1,\ldots,12)$ with  & $\bra{\psi}E \ket{\psi}$   \\ 
 &  image of $E$ on support of $\ket{\psi}$ &    \\ \hline
$X \otimes \one \otimes \one \otimes \one$ & $\emptyset$ & $0$ \\ \hline
$Z \otimes \one \otimes \one \otimes \one$ & $(1,\ldots,6,-7,\ldots,-12)$ & $\frac{1}{12}(6-6)=0$ \\ \hline
$ZX \otimes \one \otimes \one \otimes \one$  &  $\emptyset$  & $0$ \\ \hline
$\one \otimes  Z(b)  \otimes\one \otimes \one$  & $(1,\eta^{2b} 2, \eta^b 3, 4, \eta^b 5,\eta^{2b}6,$ & \\
&  $ 7, \eta^{2b}8,\eta^{2b}9,10,\eta^b 11, \eta^b 12)$ & $4(1+\eta+\eta^2)=0$ \\ \hline
$X \otimes Z(b)  \otimes\one \otimes \one$ &   $\emptyset$  & $0$ \\ \hline
$Z \otimes Z(b)  \otimes\one \otimes \one$ &  $(1,\eta^{2b} 2, \eta^b 3, 4, \eta^b 5,\eta^{2b}6,$ & \\
& $- 7, -\eta^{2b}8,-\eta^{2b}9,-10, -\eta^b 11, -\eta^b 12)$ & $(2-2)(1+\eta+\eta^2)=0$ \\ \hline
$ZX \otimes Z(b)  \otimes\one \otimes \one $ &  $\emptyset$  & $0$  \\ \hline

$\one \otimes  X(1)Z(b)  \otimes\one \otimes \one $ &  $\emptyset$  & $0$ \\ \hline
$X \otimes X(1)Z(b)  \otimes\one \otimes \one$ & $(-\eta^b 2, \eta^b 6,\eta^b 8,-\eta^b 9  )$ & $\frac{1}{12}(2\eta^b-2\eta^b)=0$ \\ \hline
$Z \otimes X(1)Z(b)  \otimes\one \otimes \one $ &  $\emptyset$ & $0$ \\ \hline
$ZX \otimes X(1)Z(b)  \otimes\one \otimes \one $ & $(-\eta^b 2, \eta^b 6,-\eta^b 8,\eta^b 9 )$& $\frac{1}{12}(2\eta^b-2\eta^b)=0$  \\ \hline
$\one \otimes  X(2)Z(b)  \otimes\one \otimes \one $ &  $\emptyset$ & $0$ \\ \hline
$X \otimes X(2)Z(b)  \otimes\one \otimes \one $ & $(-\eta^{2b} 3, \eta^{2b} 5,-\eta^{2b} 11, \eta^{2b} 12)$ & $\frac{1}{12}(2\eta^{2b}-2\eta^{2b})=0$ \\ \hline
$Z \otimes X(2)Z(b)  \otimes\one \otimes \one $ & $\emptyset$ & $0$ \\ \hline
$ZX \otimes X(2)Z(b)  \otimes\one \otimes \one $ & $(-\eta^{2b} 3,\eta^{2b} 5,\eta^{2b} 11, -\eta^{2b} 12  )$ & $\frac{1}{12}(2\eta^{2b}-2\eta^{2b})=0$  \\ \hline
\end{tabular}
\end{center}

Thus, $Q(\mathcal S)$ is a $(((2,3,3,3),1,\sqrt{54}))$ quantum error correcting code and, by Theorem~\ref{ameisstab}, $\ket{\psi}$ is an absolutely maximally entangled state.
\end{example}

Consider the subgroups of index two of $\mathcal S$ from Example~\ref{notfelix}. We might hope that these might provide higher dimensional codes which can also detect dimensional weight errors of weight less than $\Delta$. However, in each case we find a two-dimensional undetectable error.

\begin{center}
\begin{tabular}{|c|c|c|c|} \hline
name & generators & basis for $Q(\mathcal S_i)$ & undetectable error \\ \hline
$\mathcal S_0$ & $M_0^2$, $M_1$, $M_2$ & $\ket{\psi}$, $\ket{0000}$ & $Z\otimes \one \otimes \one \otimes\one$ \\ \hline
$\mathcal S_1$ & $M_0$, $M_1^2$, $M_2$ & $\ket{\psi}$, $\frac{1}{\sqrt{12}} \sum_{j=13}^{24} \ket{\phi_j}$ & $ZX\otimes \one \otimes \one \otimes\one$ \\ \hline
$\mathcal S_2$ & $M_0$, $M_1$, $M_2^2$ & $\frac{1}{\sqrt{6}} \sum_{j=1}^{6} \ket{\phi_j}$,\ $\frac{1}{\sqrt{6}} \sum_{j=7}^{12} \ket{\phi_j}$ &  $Z\otimes \one \otimes \one \otimes\one$ \\ \hline
\end{tabular}
\end{center}

There are interesting related subgroups which also yield absolutely maximally entangled stabilised states. For example, the subgroup $\langle M_0,-M_1,M_2 \rangle$ stabilises the absolutely maximally entangled state
$$
\frac{1}{\sqrt{12}} \sum_{j=13}^{24} \ket{\phi_j}.
$$

\section{Further comments}
In this article we have proved a quantum Singleton bound for quantum error correcting codes for mixed dimensional Hilbert spaces. 
Ketkar et al. \cite{KKKS2006} proved a quantum Hamming bound for constant dimensional spaces from which one should be able to generalise to mixed dimensional spaces.

We have provided many examples of absolutely maximally entangled states in mixed dimensional Hlbert spaces. A natural question to ask is if there exists an orthonormal basis of absolutely maximally entangled states for the space. For example, the Bell states form an orthonormal basis of absolutely maximally entangled states of $\mathbb C^2 \otimes \mathbb C^2$.

The stabiliser codes for mixed systems we have found are stabilised by abelian groups of unitary operators which are permutations, up to unit scalars, of the computational basis elements. These operators in many cases are non-local. It seems unlikely that these codes are stabilised by local operators since for a mixed system including subsystems of dimension $D_1$ and $D_2$ where $D_1$ and $D_2$ are co-prime, it is only possible to have two such local stabilizers commute if they independently commute
over the qudits of each dimension. Observe that a pair of Weyl operators
of dimension $D$ commute up to a phase of $\eta^j$ for some integer $j$, where $\eta$ is a primitive $D$-th root of unity. For example, with a single qubit subsystem, this would restrict the qubit part of such local operators to only be the identity, and codes constructed from such operators would fail to detect any errors on the qubit.

It would be interesting to have more examples of codes for mixed dimensional subspaces with larger dimension $K$. Especially those which attain the quantum Singleton bound for mixed dimensional spaces given by Theorem~\ref{singleton}.

\section{Acknowledgements}
We would like to thank Felix Huber for his very helpful comments and the anonymous referees for their comments.

\vspace{1cm}

   Simeon Ball\\
   Departament de Matem\`atiques, \\
Universitat Polit\`ecnica de Catalunya, \\
Modul C3, Campus Nord,\\
Carrer Jordi Girona, 1-3\\
08034 Barcelona, Spain \\
   {\tt simeon.michael.ball@upc.edu} \
   
   \bigskip
   
     Raven Zhang\\
   Departament de F\'isica, \\
Universitat de Barcelona, \\
Carrer Mart\'i i Franqu\`es, 1-11\\
08028 Barcelona, Spain \\
   {\tt  rzhang404@gmail.com} \


\begin{thebibliography}{99}

\bibitem{andrist2015error}
R. S. Andrist, J. R. Wootton and H. G. Katzgraber, Error thresholds for Abelian quantum double models: Increasing the bit-flip stability of topological quantum memory, {\it Phys. Rev. A}, {\bf 91} (2014) 042331.

\bibitem{Ball2021} S. Ball, Some constructions of quantum MDS codes, {\it Des. Codes Cryptogr.}, {\bf 89} (2021) 811--821.

\bibitem{Beaudrap2013} N. de Beaudrap, A linearized stabilizer formalism for systems of finite dimension, {\it Quantum Inf. Comput.} {\bf 13} (2013) 73--115.

\bibitem{Brennen2003} G. K. Brennen, An observable measure of entanglement for pure states of multi-qubit systems, {\it Quantum Inf. Comput.}, {\bf 3} (2003) 619--626.

\bibitem{BLvN2016} J. Bermejo-Vega, C. Yen-Yu Lin and M. Van den Nest, Normalizer circuits and a Gottesman-Knill theorem for infinite-dimensional systems,  {\it Quantum Inf. Comput.}, {\bf 16} (2016) 361--422.

\bibitem{BESWZ2024} A. J. Brady, A. Eickbusch, S. Singh, J. Wu and Q. Zhuang, Advances in bosonic quantum error correction with Gottesman-Kitaev-Preskill Codes: Theory, engineering and applications,
{\it Progress in Quantum Electronics}, {\bf 93} (2024) 100496.

\bibitem{Brennen2003} G. K. Brennen, An observable measure of entanglement for pure states of multi-qubit systems, {\it Quantum Inf. Comput.} {\bf 3} (2003) 619--626.

\bibitem{BrunLidar2013}
T. A. Brun and D. E. Lidar, 
{\it Quantum Error Correction},
Cambridge University Press, 2013.

\bibitem{CMSZS2021} W. Cai, Y. Ma, W. Wang, C.-L. Zou, L. Sun, Bosonic quantum error correction codes in superconducting quantum circuits,
 {\it Fundamental Research},  {\bf 1} (2021) 50--67.

\bibitem{CRSS1998} A. R. Calderbank, E. M. Rains, P. W. Shor, and N. Sloane, Quantum
error correction via codes over $GF(4)$, {\it IEEE Trans. Inf. Theory}, {\bf 44} (1998) 1369--1387.

\bibitem{CW2019} J. Che and H. Wu, Construct multipartite maximally entangled state via recurrence relation, {\it Int. J. Theoretical Physics}, {\bf 58} (2019) 584--590.


\bibitem{FGG2007} G. D. Forney, M. Grassl and S. Guha, Convolutional and tail biting quantum error-correcting codes, {\it IEEE Trans. Inf. Theory}, {\bf 53} (2007) 865--880.

\bibitem{FAMMC2012} A. G. Fowler, M. Mariatoni, J. M. Martinis and A. N. Cleland, Surface codes: Towards practical large-scale quantum computation, {\it Phys. Rev. A}, {\bf 86} (2012) 032324.


\bibitem{Gottesman1996}
D. Gottesman, Class of quantum error-correcting codes saturating the quantum Hamming bound, {\it Phys. Rev. A}, {\bf 54} (1996) 1862--1868.

\bibitem{Gottesman1997}
D. Gottesman, 
{\it Stabilizer Codes and Quantum Error Correction},
PhD Thesis (1997).

\bibitem{Gottesman2009} 
D. Gottesman, An Introduction to Quantum Error Correction and Fault-Tolerant Quantum Computation, in {\it Quantum Information Science and Its Contributions to Mathematics, Proceedings of Symposia in Applied Mathematics} {\bf 68}, pp. 13-58 (Amer. Math. Soc., Providence, Rhode Island, 2010). 

\bibitem{GBZ2016} D. Goyeneche, J. Bielawski and K. Zyczkowski,
Multipartite entanglement in heterogeneous systems, {\it Phys. Rev. A} {\bf 94} (2016) 012346.

\bibitem{Heinrich2021}
M. Heinrich, {\it On stabiliser techniques and their application to simulation and certification of quantum devices}, University of Cologne PhD Thesis (2021).

\bibitem{HC2013} W. Helwig and W. Cui, Absolutely Maximally Entangled States:
Existence and Applications, (2013), {\tt arxiv.org/pdf/1306.2536}.

\bibitem{HCLRL2012}
W. Helwig, W. Cui, J. I. Latorre, A. Riera and H.W Lo, Absolute maximal entanglement and quantum secret sharing, {\it Phys. Rev. A}, {\bf 86} (2012) 052335.

\bibitem{HS2000} A. Higuchi, A. Sudbery, How entangled can two couples get?,
{\it Phys. Lett. A},
{\bf 273} (2000) 213--217,

\bibitem{HG}
F. Huber and M. Grassl, Quantum codes of maximal distance and highly entangled subspaces, {\it Quantum}, {\bf 4} (2020) 284.

\bibitem{HESG2018}
F. Huber, C. Eltschka, J. Siewert and O. G\"uhne, Bounds on absolutely maximally entangled states from shadow inequalities, and the quantum MacWilliams identity, {\it J. Phys. A} {\bf 51} (2018) 175301.

\bibitem{KKKS2006} A. Ketkar, A. Klappenecker, S. Kumar and P. K. Sarvepalli, Nonbinary stabilizer codes over finite fields, {\em IEEE Trans. Inform. Theory}, {\bf 52} (2006) 4892--4914.

\bibitem{KR2002a} A. Klappenecker and M. Rotteler, Beyond stabilizer codes .I. Nice error bases, EEE Transactions on Information Theory, {\it IEEE Trans. Inform. Theory}, {\bf 48} (2002) 2392--2395.

\bibitem{KR2002b} A. Klappenecker and M. Rotteler, Beyond stabilizer codes II: Clifford codes, in IEEE Transactions on Information Theory, {\it IEEE Trans. Inform. Theory}, {\bf 48} (2002) 2396--2399.

\bibitem{KL1997} E. Knill and R. Laflamme. A theory of quantum error-correcting codes, {\it Phys. Rev. A}, {\bf 55} (1997) 900--911.

\bibitem{LZX2008} Z. Li,  L. J. Xing  and X. M. Wang, Quantum generalized Reed-Solomon codes: Unified framework for quantum maximum-distance-separable codes, {\it J. Phys. A} {\bf 77} (2008) 012308.

\bibitem{loschnauer2025scalable} C. M. L\"oschnauer, J. Mosca Toba, A. C. Hughes, S. A. King, M. A. Weber, R. Srinivas, R. Matt, R. Nourshargh, D. T. C. Allcock, C. J. Ballance, C. Matthiesen, M. Malinowski, T. P. Harty, Scalable, high-fidelity all-electronic control of trapped-ion qubits. {\it PRX Quantum}, {\bf 6} (2025) 040313.

\bibitem{MW2002} D. A. Meyer and N. R. Wallach, Global entanglement in multiparticle systems, {\it J. Math. Phys.}, {\bf 43} (2002) 4273--4278.


\bibitem{NielsenChuang2000}
M. Nielsen and I. Chuang,
{\it Quantum Computation and Quantum Information},
Cambridge University Press, 2000.


\bibitem{Rains1999b} E. M. Rains. Non-binary quantum codes,  {\it IEEE Trans. Inform. Theory}, {\bf 45} (1999) 827--1832.

\bibitem{Rains1999} E. M. Rains. Quantum shadow enumerators, {\it IEEE Trans. Inform. Theory}, {\bf 45} (1999) 2361--2366.

\bibitem{Rains2000} E. M. Rains, Polynomial invariants of quantum codes, {\it IEEE Trans. Inform. Theory}, {\bf 46} (2000) 54--59.


\bibitem{saxena2025realization}
Y. Saxena, S. Chatterjee, and T. Sapv, Realization of maximally-entangling two-qutrit gates using the Cross-Resonance scheme, (2025), \url{arXiv:2504.15265}.

\bibitem{Scott2004} A. J. Scott, Multipartite entanglement, quantum-error-correcting codes, and entangling power of quantum evolutions, {\it Phys. Rev. A}, {\bf 69} (2004) 052330.

\bibitem{WYFO2013} Z. Wang, S. Yu, H. Fan and C. H. Oh, Quantum error-correcting codes over mixed alphabets, {\it Phys. Rev. A}, {\bf 88} (2013) 022328.

\bibitem{ZAZ2020} X-W. Zha, I. Ahmed and Y. Zhang,  
Constructing five qutrit absolutely maximally entangled state via recurrence relation, {\it Laser Physics}, {\bf 30} (2020) 075201.

\end{thebibliography}
\end{document}